\newtheorem{thm}{Theorem}
\theoremstyle{remark}
\newcommand{\E}{{\cal{E}}}
\newcommand{\bx}{\bm{x}}
\def\R{\mathbb{R}}
\def\E{\mathbb{E}}
\def\P{\mathbb{P}}
\def\1{\bm{1}}
\def\S{\mathcal{S}}
\def \J{{\mathcal{J}}}
\title{Near-Optimal Adaptive Compressed Sensing }
\begin{document}

\author{\IEEEauthorblockN{Matthew L. Malloy, \emph{Member, IEEE} \; and \; Robert D. Nowak, \emph{Fellow, IEEE}}
 \thanks{Manuscript received May 24, 2013; revised November 7, 2013; accepted April 19, 2014. This work was partially supported by AFOSR grant FA9550-09-1-0140, NSF grant CCF-1218189 and the DARPA KECoM program.  The material in this paper was presented in part at the Asilomar Conference on Signals, Systems and Computers, Pacific Grove, CA, USA \cite{MalloyNearOptimalACS}, November, 2012.
\newline \indent M. L. Malloy and R. D. Nowak are with the Department of Electrical and Computer Engineering, University of Wisconsin, Madison, WI 53715 USA (e-mail: mmalloy@wisc.edu, nowak@engr.wisc.edu).
}
}
\maketitle

\begin{abstract} 
This paper proposes a simple adaptive sensing and group testing algorithm for sparse signal recovery.  The algorithm, termed Compressive Adaptive Sense and Search (CASS), is shown to be near-optimal in that it succeeds at the lowest possible signal-to-noise-ratio (SNR) levels, { improving on previous work in adaptive compressed sensing \cite{haupt2009adaptive, iwen2009group, iwen2011adaptive, haupt2012sequentially }.} 
Like traditional compressed sensing based on random non-adaptive design matrices, the CASS algorithm requires only $k\log n$ measurements to recover a $k$-sparse signal of dimension $n$. However, CASS succeeds at SNR levels that are a factor $\log n$ less than required by standard compressed sensing.   From the point of view of constructing and implementing the sensing operation as well as computing the reconstruction, the proposed algorithm is substantially less computationally intensive than standard compressed sensing.  CASS is also demonstrated to perform considerably better in practice through simulation.  To the best of our knowledge, this is the first demonstration of an adaptive compressed sensing algorithm with near-optimal theoretical guarantees {\em and} excellent practical performance.  This paper also shows that methods like compressed sensing, group testing, and pooling have an advantage beyond simply reducing the number of measurements or tests -- adaptive versions of such methods can also improve detection and estimation performance when compared to non-adaptive direct (uncompressed) sensing.

\end{abstract}
\section{Introduction}
Compressed sensing (CS) has had a tremendous impact on signal processing, machine learning, and statistics, fundamentally changing the way we think about sensing and data acquisition.   
Beyond the standard compressed sensing methods that gave birth to the field, the compressive framework naturally suggests an ability to make measurements in an on-line and adaptive manner.  Adaptive sensing uses previously collected measurements to guide the design and selection of the next measurement in order to optimize the gain of new information.

There is now a reasonably complete understanding of the potential advantages of adaptive sensing over non-adaptive sensing, the main one being that adaptive sensing can reliably recover sparse signals at lower SNRs  than non-adaptive sensing {(where SNR is proportional to the squared amplitude of the weakest non-zero element)}. Roughly speaking, to recover a $k$-sparse signal of length $n$, standard (non-adaptive) sensing requires the SNR to grow like $\log n$. Adaptive sensing, on the other hand, succeeds as long as the SNR scales like $\log k$.  This is a significant improvement, especially in high-dimensional regimes.
In terms of the number of measurements, both standard compressed sensing and adaptive compressed sensing require about $k\log n$ measurements.

This paper makes two main contributions in adaptive sensing.  First, we propose a simple adaptive sensing algorithm, termed Compressive Adaptive Sense and Search (CASS) that is proved to be near-optimal in that it succeeds if the SNR scales like $\log k$.   From the point of view of constructing and implementing the sensing operation as well as computing the reconstruction, the CASS algorithm is comparatively less computationally intensive than standard compressed sensing. The algorithm could be easily realized in a number of existing compressive systems, including those based on digital micromirror devices \cite{duarte2008single,studer2012compressive}.  Second, CASS is demonstrated in simulation to perform considerably better than \emph{1)} compressed sensing based on random Gaussian sensing matrices and \emph{2)}  non-adaptive direct sensing (which requires $m=n$ measurements).  To the best of our knowledge, this is the first demonstration of an adaptive sensing algorithm with near-optimal theoretical guarantees  \emph{and} excellent practical performance.  

The results presented in this paper answer the following important question.  Can sensing systems that make adaptive measurements significantly outperform non-adaptive systems in practice?  Perhaps not surprisingly, the answer is \emph{yes}, contradictory to the theme of \cite{2011arXiv1111.4646A}.  This is for two reasons; first, while the required SNR for success of adaptive procedures is at best a logarithmic factor in dimension smaller than non-adaptive procedures, this factor \emph{is} significant for problems of even modest size.  On test signals (natural and synthetic images, see Figs.  \ref{fig:eights} and \ref{fig:CameraMan2}) the CASS procedure consistently outperforms standard compressed sensing by 2-8 dB.   Second, in terms of computational complexity, adaptive sensing algorithms such as CASS require little or no computation after the measurement stage of the procedure is completed; in general, non-adaptive algorithms involve computation and memory intense optimization routines after measurements are gathered.

This work also sheds light on another relevant question. 
Can compressive sensing systems with an SNR budget 
ever hope to outperform direct non-compressive systems which measure each element of an unknown vector directly? The answer to this question is somewhat surprising.  For the task of support recovery, when the measurements are collected non-adaptively, compressive techniques are always less reliable than direct sensing with $n$ measurements.  However, \emph{adaptive} compressed sensing with just $k\log n$ measurements can be more reliable than non-adaptive direct sensing, provided the signal is sufficiently sparse.  This means that methods like compressed sensing, group testing, and pooling may have an advantage beyond simply reducing the number of measurements  -- adaptive versions of such methods can also improve detection and estimation performance.  

{
Lastly, while adaptive sensing results in improved performance, we mention a few limitations.  Making adaptive measurements can necessitate more flexible measurement hardware, incurring a cost in hardware design. In some systems, adaptive measurements could be easily realized (for example, existing compressive systems based on digital micromirror devices \cite{duarte2008single, studer2012compressive}).  In other systems, however, hardware constraints may make adaptive sensing entirely impractical.  The results in this paper assume adaptive measurements can be made and the additive Gaussian noise model is applicable; both requirements are made precise in the following section. }



\subsection{Problem Setup and Background}

Sparse support recovery in compressed sensing refers to the following problem.  Let $\bm{x} \in \mathbb{R}^n$ be an unknown signal with $k \ll n$ non-zero entries.  The unknown signal is measured through $m$ linear projections of the form:
\begin{eqnarray} \label{eqn:sensEqn}
y_i = \left\langle \bm{a}_i, \bm{x}  \right\rangle + z_i \qquad i = 1,...,m
\end{eqnarray}
where $z_i$ are i.i.d. $\mathcal{N}(0,1)$, and $\bm{a}_i\in\R^n$ are sensing vectors with a total sensing energy constraint
\begin{eqnarray} \label{eqn:sensEngr}
\sum_{i=1}^m||\bm{a}_i||^2_2 \leq M \: .
\end{eqnarray}
The goal of the support recovery problem is to identify the locations of the non-zero entries of $\bm{x}$.   
Roughly speaking, the paramount result of compressed sensing states this is possible if \emph{1)} $m$ is greater than a constant times $k \log n$ (which can be much less than $n$), and \emph{2)} the amplitude of the non-zero entries of $\bx$, and the total sensing energy $M$, aren't too small. 

This paper is concerned with relaxing the second requirement through adaptivity.  If the sensing vectors $\bm{a}_1,...,\bm{a}_m$ are fixed prior to making any measurements (the \emph{non-adaptive} setting), then a necessary condition for exact support recovery is that the smallest non-zero entry of $\bm{x}$ be greater than a constant times $\sqrt{\frac{n}{M} \log n}$ \cite{5319750, 5571873}.  
The factor of $M/n$ is best interpreted as the sensing energy per dimension, and $\log n$ is required to control the error rate across the $n$ dimensions.  Standard compressed sensing approaches based on random sensing matrices are known to achieve this lower bound while requiring on the order of $k \log n$ measurements \cite{wainwright2006sharp}.

In adaptive sensing, $\bm{a}_i$ can be a function of $y_1,...,y_{i-1}$.  The sensing vectors $\bm{a}_1,...,\bm{a}_m$  are \emph{not} fixed prior to making observations, but instead depend on previous measurements.  
In this adaptive scenario, a necessary condition for sparse recovery$^2$ is that the smallest non-zero entry exceed a constant times $\sqrt{\frac{n}{M} \log k}$ \cite{2012arXiv1206.0648C}.   
To the best of our knowledge, no proposed algorithms achieve this lower bound while also using only order  $k \log n$ measurements. 

A handful of adaptive sensing procedures have been proposed, some coming close to meeting the lower bound while requiring only order $k \log n$ measurements.   Most recently, in \cite{haupt2012sequentially}, the authors propose an algorithm that guarantees exact recovery provided the smallest non-zero entry exceeds an unspecified constant times $\sqrt{\frac{n}{M}\left( \log k + \log \log_2 \log n\right)}$, coming within a triply logarithmic factor of the lower bound.  While this triply logarithmic factor is not of practical concern, removing the suboptimal dependence on $n$ is of theoretical interest.  Reducing the leading constant, on the other hand, is of great practical concern.

\begin{table}
\centering 
\caption{Asymptotic requirements for exact support recovery$^1$ \newline $\left( n, k \rightarrow \infty, \; \frac{k}{n} \rightarrow 0 \right)$  \label{table1}}
 \smallskip
\begin{minipage}{9.4cm}
\addtocounter{footnote}{1}
\begin{tabular}{ p{.15cm}||p{3.6cm}|p{3.6cm}}
    &  \emph{non-adaptive} & \emph{adaptive} \\
  \hline
  \hline
\vspace{.3cm}  \begin{sideways} \emph{direct} \end{sideways}   &  \vspace{.01cm} \begin{enumerate} 
\item $m \geq n$  
\item $x_{\min} > \sqrt{2 \frac{n}{M} \log n}$ 
\end{enumerate} 
necessary and sufficient \newline traditional detection problem  &  \vspace{.01cm}
\begin{enumerate} 
\item $m > n$  
\item   $x_{\min} > \sqrt{2 \frac{n}{M} \log k} $ 
\end{enumerate}  
necessary \cite{malloy2011limits} \newline
 sufficient \cite{SequentialTesting}  \\
  \hline  
\vspace{.01cm} \begin{sideways} \emph{compressive} \end{sideways}   & \hspace{.1cm}  
\begin{enumerate} \footnotesize
\item $m \geq k \log n$  
\item $x_{\min} > \sqrt{C \frac{n}{M} \log n}$ 
\end{enumerate}
necessary \cite{5319750, 5571873}
\newline sufficient \cite{wainwright2006sharp}  
  & \hspace{.1cm} 
\begin{enumerate} \footnotesize
\item $m \geq k \log n$ 
\item $x_{\min} > \sqrt{C \frac{n}{M} \log k}$  
\end{enumerate}
necessary \cite{2012arXiv1206.0648C}$^2$ \newline sufficient using CASS$^3$
 \\  
  \hline
 \end{tabular}\par
 
\end{minipage}
\end{table}

 \footnotetext[\value{footnote}]{{We write $g(n) > f(n)$ as shorthand for $\lim_{n\rightarrow \infty} {g(n)}/{f(n)} > 1$.}}

\setcounter{footnote}{2}
 \footnotetext[\value{footnote}]{{Necessary to control the expected symmetric set difference over a slightly larger class of problems. See  Sec. \ref{sec:CASSdis}  and \cite[Proposition 4.2]{2012arXiv1206.0648C}.}}

\setcounter{footnote}{3}
 \footnotetext[\value{footnote}]{Sufficient for exact support recovery with Alg. \ref{alg:CASS1pass}  if ${x_i}\geq0$, $i=1,\dots,n$ or if $k \leq \log n$; sufficient to recover any fixed fraction of support for any $k$, $\bm{x}$ that satisfy condition.  See Theorems \ref{thm:kCBS} and \ref{thm:PosNeg}.  
 }

Table \ref{table1} summarizes the necessary and sufficient conditions for exact support recovery in the compressive, direct, non-adaptive and adaptive settings.  \emph{Direct} or \emph{un-compressed} refers to the setting where individual measurement are made of each component -- specifically, each sensing vector is supported on only one index.  Here, support recovery is a traditional detection problem, and the non-adaptive SNR requirement becomes apparent by considering the following.  If we measure each element of $\bm{x}$ once using sensing vectors that form an identity matrix when stacked together,  then $M=n$, and the requirement implies the smallest non-zero entry be greater than $\sqrt{2 \log n}$. It is well known that the maximum of $n$ i.i.d. Gaussians grows as $\sqrt{2 \log n}$; the smallest signal must exceed this value. 

Adaptive direct sensing has recently received a fair amount of attention in the context of sparse recovery \cite{SequentialTesting, wei2012multistage, newstadt2010adaptive, haupt2011distilled, malloy2011sequential, malloy2012sample, price2012lower, jamieson2013finding, jamieson2013lil}, and is closely related to traditional work in sequential statistical analysis.  In this non-compressive setting, the number of measurements must be at least on the order of the dimension -- i.e, $m \geq n$, as testing procedures must at a minimum measure each index once.    In this setting, a number of works have shown that scaling of the SNR as $\log k$ is necessary and sufficient for exact recovery \cite{SequentialTesting, malloy2011sequential}.   

To summarize, Table \ref{table1} highlights the potential gains from compressive and adaptive measurements.  Allowing for compressive measurements can reduce the total number of measurements from $n$ to $k \log n$, but does not relax the SNR requirement.  Allowing for adaptive measurements does not reduce the total number of measurements required; instead, it can reduce the required SNR scaling from $\log n$ to $\log k$.  Prior to this work, in the adaptive compressive case, it was unknown if $\log k$ scaling of the SNR was sufficient for support recovery. 






\subsection{Main Results and Contributions}
The main theoretical contribution of this work is to complete Table \ref{table1} by introducing a simple adaptive compressed sensing procedure that \emph{1)} requires only order $k \log n$ measurements, and \emph{2)} succeeds provided the minimum non-zero entry is greater than a constant times $\sqrt{\frac{n}{M} \log k}$, showing this scaling is sufficient.  To the best of our knowledge, this is the first demonstration of a procedure that is optimal in terms of dependence on SNR and dimension. 

Specifically, we propose a procedure termed Compressive Adaptive Sense and Search (CASS).   For recovery of non-negative signals, the procedure succeeds in exact support recovery with probability greater than $1-\delta$ provided the minimum non-zero entry of $\bm{x}$ is greater than
\begin{eqnarray} \nonumber
\sqrt{20 \frac{n}{M} \left(  \log k +\log \left( \frac{8}{\delta}\right) \right)},
\end{eqnarray}
requires exactly $m = 2k \log_2 (n/k)$ measurements,  and has total sensing energy $\sum_{i=1}^m||\bm{a}_i||^2_2 = M$. 
This result implies that as $n$ grows large, if the minimum non-zero entry exceeds $\sqrt{20 \frac{n}{M}\log k}$, the procedure succeeds in exact support recovery.  

When the signal contains both positive and negative entries, the CASS procedure guarantees the following.   On average, the procedure succeeds in recovery of at least a fraction $1-\epsilon$ of the non-zero components provided their magnitude is at least
\begin{eqnarray} \nonumber
  \sqrt{ 20 \frac{n}{M} \left( \log{ k } +2 \log\left( \frac{8}{ \epsilon} \right) \right)  }.
\end{eqnarray}
In this setting the procedure requires {less than $2k \log_2 \left(n/k\right) +8k/\epsilon$} measurements and has total sensing energy $\sum_{i=1}^m||\bm{a}_i||^2_2 = M$.


In addition to these theoretical developments, we present a series of numerical results that show CASS outperforms standard compressed sensing for problems of even modest size.  The numerical experiments reinforce the theoretical results --  as the dimension of the problem grows, performance of the CASS algorithm does not deteriorate, confirming the logarithmic dependence on dimension has been removed.   We also show side-by-side comparisons between standard compressed sensing, direct sensing, and CASS on approximately sparse signals.

\subsection{{Prior Work}} \label{sec:priorWork}

A number of adaptive compressed sensing procedures have been proposed (see \cite{iwen2009group, ji2008bayesian, iwen2011adaptive, 2012arXiv1202.0937D, haupt2012sequentially, iwen2012adaptive, 5470138} and references therein), some coming close to meeting the lower bound while requiring only order $k \log n$ measurements.   

Most closely related to the CASS procedure proposed here, from an algorithmic perspective, are the procedures of Iwen \cite{iwen2009group}, Iwen and Tewfik \cite{iwen2011adaptive, iwen2012adaptive}, and the Compressive Binary Search of Davenport and Arias-Castro \cite{2012arXiv1202.0937D}. Much like CASS, these procedures rely on repeated bisection of the support of the signal.  The first proposal of a bisecting search procedure applied to the adaptive compressed sensing problem, to the best of our knowledge, was \cite{iwen2009group}.  The 
theoretical guarantees of this work when translated to the setting studied here ensure exact support recovery provided the non-zero entries of $\bm{x}$ are greater than a constant times $\sqrt{\frac{n}{M} ( \log^2 k + \log^2 \log n ) }$ (see \cite[Theorem 2]{iwen2012adaptive} and more succinctly, \cite[Theorem 2]{iwen2011adaptive}).

Most recently, in \cite{haupt2012sequentially}, the authors propose an algorithm termed Sequential Compressed Sensing.  The procedure is based on random sensing vectors with sequentially masked support, allowing the procedure to focus sensing energy on the suspected non-zero components. The procedure guarantees exact recovery provided the smallest non-zero entry exceeds an unspecified constant times $\sqrt{\frac{n}{M}\left( \log k + \log \log_2 \log n\right)}$, coming within in a triply logarithmic factor of the theoretical bound lower bound.

While the bisection approach used by CASS is suggested in a number of other works, the distinguishing feature of the 
the procedure presented here is the allocation of the sensing energy. In the case of adaptive compressed sensing, roughly speaking, both procedures in \cite{iwen2009group}, and later in \cite{2012arXiv1202.0937D}, suggest an allocation of sensing energy across the bisecting steps of the procedure that ensure the probability a mistake is made at any step is the same.   The CASS procedure instead uses less sensing energy for initial steps, and more sensing energy at later steps, reducing the the probability the procedure makes an error from step to step.  This allocation of the sensing energy removes any sub-optimal dependence on dimension from the SNR requirement; the CASS procedure succeeds provided the minimum non-zero entry exceeds $\sqrt{ 20 \frac{n}{M}  \log k} $.  

{Compressed sensing can be viewed as a variant of non-adaptive group testing (see \cite{atia2012boolean, gilbert2008group} for a detailed discussion).  In traditional group testing \cite{du1993combinatorial}, binary valued measurements indicate if a group of items contains one more defectives, distinct from the real valued measurement model in (1).  While the measurement models are different, the CASS procedure attempts to emulate adaptive group testing by forming partial sums of the sparse signal in an adaptive, sequential manner. In the traditional group testing literature, CASS is perhaps most similar to the \emph{generalized binary splitting} procedure of \cite{hwang1972method}. Like generalized binary splitting, CASS first isolates defective items by measuring groups of size approximately $n/k$.   The procedures deviate beyond this; the main distinguishing feature of CASS is the allocation of the sensing energy across passes required to cope with additive Gaussian noise.  To the best of our knowledge, analogous allocations of have not been proposed to cope with noise in adaptive group testing as noisy group testing has remained largely unexplored \cite{atia2012boolean}.
This highlights another contribution of CASS -- although we focus on the compressed sensing framework, the key idea (the allocation of the sensing energy) could be adapted to other measurement models, such as a binary symmetric noise model in group testing.  In that case, CASS would suggest an allocation of measurements across stages to compensate for a fixed probability of error per measurement.}




\subsection{Notation}
Notation adopted throughout, in general, follows convention.  Bold face letters, $\bm{x}$, denote vectors (both random and deterministic), matrices are denoted with capital letters, for example, $A$, while calligraphic font, such as $\S$, denotes a set or an event.  The indicator function $ \mathrm{I}_{\{\mathcal{E}\}}$ is equal to one if the event $\mathcal{E}$ occurs, and equals zero otherwise.  The vector $\bm{1}_{\{\S\}}$, for any support set $\S$, is a vector of ones over $\S$, and $0$ elsewhere. Expectation is denoted $\mathbb{E} [ \cdot] $ and the probability of an event $\mathbb{P}(\cdot)$.


\section{Problem Details} \label{sec:details}
Consider the canonical compressed sensing setup in which a sparse signal $\bm{x} \in \mathbb{R}^n$ supported on $\mathcal{S} \subset \{1,\dots, n\}$, $|\S| = k$, is measured through $m$ linear projections as in (\ref{eqn:sensEqn}).  In matrix-vector form, 
\begin{eqnarray} \nonumber
\bm{y} = A \bm{x} + \bm{z}
\end{eqnarray}
where $\bm{z} \in \mathbb{R}^m \sim \mathcal{N}(0,I)$ and $ A \in\R^{m \times n}$ is the sensing matrix with total sensing energy constraint
\begin{eqnarray} \label{eqn:sensEngr}
||A||_{\mathrm{fro}}^2 \leq M. \: 
\end{eqnarray}
The total sensing energy constraint reflects a limit on the total amount of energy or time available to the sensing device, as in \cite{haupt2012sequentially, 2012arXiv1202.0937D}.  Importantly, it results in a decoupling between the energy constraint and the number of measurements, allowing for fair comparison between procedures that take a different number of measurements. 
In adaptive sensing the rows of $A$, denoted $\bm{a}_i^T$, are designed sequentially and adaptively based on prior observations, $y_1, \dots, y_{i-1}$, as opposed to the non-adaptive setting in which $\bm{a}_i$, $i=1,\dots,m$ are fixed a-priori.  

The error metrics considered are the probability of exact support recovery, or the family wise error rate,
\begin{eqnarray} \nonumber
 \P\left(\widehat \S \neq \S \right)
\end{eqnarray}
and the symmetric set difference  
\begin{eqnarray} \nonumber
d(\widehat{\S},S) = \left \vert \{\S \setminus \widehat{\S} \} \cup \{ \widehat{\S} \setminus \S \} \right \vert
\end{eqnarray}
where $\widehat \S$ is an estimate of $\S$.  Both depend on the procedure, $n$, $k$, $M$ and the amplitude of the non-zero entries of $\bm{x}$. {  In particular, we quantify performance in terms of the minimum amplitude of the non-zero  elements of $\bm{x}$, given as
\begin{eqnarray} \nonumber
x_{\min} = \min_{i\in \S}  |x_i|.
\end{eqnarray}  
The the SNR is defined as the amplitude of the smalled non-zero entry squared, scaled by the sensing energy per dimension:
\begin{eqnarray} \nonumber
\mathrm{SNR} = \frac{x_{\min}^2 M }{n}.
\end{eqnarray}
}

In general, no assumption is made on the support set $\S$ other than $|\S|= k$.  Theorem 2 requires the sparse support be chosen uniformly at random over all possible cardinality $k$ support sets, which can be avoided by randomly permuting the support before running the algorithm.   

Lastly, throughout the paper we assume that both $n$ and $k$ are powers of two for simplicity of presentation.  This can of course be relaxed, at the cost of increased constants.

\section{Compressive Adaptive Sense and Search} \label{sec:CASS}

Conceptually, the CASS procedure operates by initially dividing the signal into a number of partitions, and then using compressive measurements to test for the presence of one or more non-zero elements in each partition.  The procedure continues its search by bisecting the most promising partitions, with the goal of returning the $k$ largest components of the vector.  This compressive sensing and bisecting search approach inspired the name of the algorithm.  

\subsection{The CASS Procedure}
The CASS procedure is detailed in Alg.  \ref{alg:CASS1pass}.  The procedure requires three inputs: \emph{1)} $k$, the number of non-zero coefficients to be returned by the procedure, \emph{2)} $M$, the total budget, and $\emph{3)}$ and an initial scale parameter $\epsilon$, which is in most practical scenarios is set to $\epsilon = 1$. 

 Since the procedure is based on repeated bisections of the signal, it is helpful to define the dyadic subintervals of $\{1,\dots, n\}$.   The dyadic subintervals,  $\J_{j,\ell} \subset \{1,\dots, n\}$, are given by
\begin{eqnarray}  \nonumber
\J_{j,\ell} &=& \left\{  \frac{(\ell-1)n}{ 2^j }+1, \dots ,\frac{\ell n}{2^j}  \right\},  \\ \nonumber
 && j = 0,1,\dots,\log_2 n, \quad \ell = 1,\dots, 2^j  
\end{eqnarray}
where $j$ indicates the scale and $\ell$ the location of the dyadic partition. 

The procedure consists of a series of $s_0$ steps.  Conceptually, on the first step, $s=1$, the procedure begins by dividing the signal into $\ell_0$ partitions, where $\ell_0$ is the smaller of $n$ or the next power of two greater than $4k/\epsilon$.  These initial partitions are given by the dyadic sub-intervals of $\{1,\dots,n\}$ at scale $\log_2 (\ell_0)$:
\begin{eqnarray} \nonumber
\J_{\log_2 (\ell_0), \ell} \qquad \ell = 1,\dots, \ell_0.
\end{eqnarray}
The procedure measures the signal with $\ell_0$ sensing vectors, each with support over a single dyadic sub-interval.
The procedure then selects the largest $k$ measurements in absolute value.  These $k$  measurements are used to define the support of the sensing vectors used on step two of the procedure.  More specifically, the supports of the sensing vectors corresponding to the $k$ largest measurements are bisected, giving $2k$ support sets.  These $2k$ support sets define the support of the sensing vectors on step $s=2$.  

The procedure continues in the fashion, taking $2k$ measurements on each step (after the initial step, which requires $\ell_0$ measurements) and bisecting the support of the $k$ largest measurements to define the support of the sensing vectors on the next step. 
On the final step, $s = s_0$, where $s_0 = \log_2 \frac{n}{\ell_0} +1$, the support of the sensing vectors consists of a single index; the procedure returns the support of the sensing vectors corresponding to the $k$ largest measurements as the estimate of $\S$ and terminates.  Additionally, estimates of the values of the non-zero coefficients are returned based on the measurements made on the last step.

One of the key aspects of the CASS algorithm is the allocation of the measurement budget across the steps of the procedure.  On step $s$, the amplitude of the non-zero entries of the sensing vectors is given as 
\begin{eqnarray} \label{eqn:sensAmp}
 \sqrt{\frac{Ms}{\gamma n} } 
\end{eqnarray}
where  $\gamma$ is an internal parameter of the procedure used to ensure the total sensing energy constraint in (\ref{eqn:sensEngr}) is satisfied with equality.  Specifically, 
\begin{eqnarray} \nonumber
\gamma = {1} + \frac{4k}{\ell_0} \sum_{s=2}^{s_0} s 2^{-s}.
\end{eqnarray}
This particular allocation of the sensing energy gives the CASS procedure its optimal theoretical guarantees.  While the amplitude of each sensing vector grows in a polynomial manner with each step, the support of the sensing vectors decreases geometrically; this results in a decrease in the sensing energy across steps, but an increase in reliability. { In essence, the allocation exploits the fact that an increase in reliability comes at a smaller cost for steps later in the procedure.}

Details of the CASS procedure are found in Alg. \ref{alg:CASS1pass}, and performance is quantified in the following section.

\begin{algorithm}[h]
\caption{\hspace{1cm} CASS \hspace{.2cm} }
\begin{algorithmic}  \label{alg:CASS1pass}
\STATE{input: number of terms in approximation $k$, \\
\hspace{.85cm} budget $M$, \\
\hspace{.85cm} initial scale parameter $\epsilon \in (0,1]$ -- (default $\epsilon = 1$)}
\STATE{initialize: $\ell_0 = \min \left\{ 4 \cdot 2^{\lceil\log_2(k/\epsilon) \rceil} ,n  \right\}$ partitions \\ 
$\qquad \qquad s_0 = \log_2 \frac{n}{\ell_0} +1$ steps \\
$ \qquad \qquad \gamma = 1 + 4k/\ell_0 \sum_{s=2}^{s_0} s 2^{-s}  $  \\
$ \qquad \qquad \mathcal{L} = \{1,2,\dots, \ell_0 \}$ \\
}

\FOR{$s = 1, \dots, s_0$}
\FOR{$\ell \in \mathcal{L}$}
\STATE{ $\bm{a}_{s,\ell} = \sqrt{\frac{Ms}{\gamma n}} \; \bm{1}_{\left\{\J_{\log_2 (\ell_0) -1+s,\ell}\right\}}$ }
\STATE{{\bf measure}:  $y_{s,\ell} = \langle \bm{a}_{s,\ell},\bm{x}\rangle + z_{s,\ell} \qquad z_{s,\ell} \overset{iid}{\sim} \mathcal{N}(0,1) $ }
\ENDFOR
\STATE{let  $\ell_1, \ell_2, \dots,\ell_{|\mathcal{L}|}$ be such that  \\ $\qquad \qquad |y_{s,\ell_1}|\geq |y_{s,\ell_2}|\geq \dots \geq |y_{s,\ell_{|\mathcal{L}|}} |$}
\IF{$s \neq s_0 $}
\STATE{$\mathcal{L} = \left\{2 \ell_1-1, 2 \ell_1, 2 \ell_2-1, 2  \ell_2,  \dots , 2 \ell_k-1, 2 \ell_k \right\}$}
\ELSE
\STATE{$\widehat{\S} = \left\{\ell_1, \ell_2, \dots, \ell_k   \right\} $ 
\\ $\widehat{x}_{i} = y_{s,i} \cdot \sqrt{\frac{\gamma n}{Ms}} $ for all $i \in \widehat{S}$ }
\ENDIF
\ENDFOR
\STATE{{\bf output:} $\widehat{\S}$ ($k$ indices), \indent estimates $\widehat{x}_i$ for $i \in \widehat{\S}$}
\end{algorithmic}
\end{algorithm}


\subsection{Theoretical Guarantees}
The theoretical guarantees of CASS are presented in Theorems \ref{thm:kCBS} and \ref{thm:PosNeg}.  Theorem \ref{thm:kCBS} presents theoretical guarantees for non-negative signals, while Theorem \ref{thm:PosNeg} quantifies performance when the signal is both positive and negative.  

\begin{thm} \label{thm:kCBS} Assume $x_i \geq 0$ for all $i = 1,\dots,n$. Set $\epsilon = 1$.  For $\delta \in (0,1]$,
Alg. \ref{alg:CASS1pass} has $\P(\widehat{\S} \neq \S) \leq \delta$ provided 
\begin{eqnarray} \nonumber
x_{\min} \geq \sqrt{20 \frac{n}{M} \left(  \log k  + \log \left( \frac{8}{\delta}\right)  \right)},
\end{eqnarray}
has total sensing energy $||A||_{\mathrm{fro}}^2 = M$ and uses $m = 2k \log_2 \frac{n}{k}$ measurements.
\end{thm}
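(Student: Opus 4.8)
The plan is to track, step by step, whether every dyadic interval currently containing part of the support survives the top-$k$ selection, and to show the total probability of ever dropping such an interval is at most $\delta$. Call an interval in $\mathcal{L}$ on step $s$ \emph{active} if it contains at least one index of $\S$ and \emph{inactive} otherwise. Because the intervals in $\mathcal{L}$ are disjoint and $|\S| = k$, there are at most $k$ active intervals on any step. The first thing I would establish is the structural invariant: if on every step the top-$k$ measurements include all active intervals, then the bisection keeps all of $\S$ covered and, on step $s_0$, returns exactly $\S$. Non-negativity enters here in two ways -- it guarantees that the noiseless measurement of an active interval is $\sqrt{Ms/(\gamma n)}\sum_{j \in \J_{j,\ell}} x_j \ge \sqrt{Ms/(\gamma n)}\,x_{\min}$ (no cancellation), while an inactive interval has mean $0$.

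Next I would reduce ``failure on step $s$'' to a comparison between active and inactive measurements. Since at most $k$ intervals are active and the procedure retains $k$, an active interval can be pushed out of the top $k$ only if at least one \emph{inactive} interval outranks it; hence
\[
\P(\text{step } s \text{ fails}) \;\le\; \P\!\left(\max_{\text{inactive}} |y_{s,\ell}| \;>\; \min_{\text{active}} |y_{s,\ell}|\right).
\]
I would bound the right-hand side by splitting at the threshold $t = \tfrac12\mu_s$, where $\mu_s := \sqrt{Ms/(\gamma n)}\,x_{\min}$, using the union bound over the (at most $\ell_0 = 4k$) intervals together with the Gaussian tail bound $1-\Phi(x)\le \tfrac12 e^{-x^2/2}$. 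This yields a per-step bound of the form $\P(\text{step } s \text{ fails}) \le 4k\, e^{-\mu_s^2/8}$, where the conditioning on earlier steps succeeding is harmless because the noise is independent across steps and, given success so far, $\mathcal{L}$ contains all active intervals with the mean/variance structure above.

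The decisive step is summing over $s = 1,\dots,s_0$. Writing $c := Mx_{\min}^2/(8\gamma n)$ gives $\mu_s^2/8 = cs$, so
\[
\P(\widehat\S \neq \S) \;\le\; 4k\sum_{s=1}^{s_0} e^{-cs} \;\le\; \frac{4k\,e^{-c}}{1-e^{-c}}.
\]
This is precisely where the energy allocation $\sqrt{Ms/(\gamma n)}$ pays off: the linear-in-$s$ growth of $\mu_s^2$ turns the union bound over steps into a convergent geometric series, rather than $s_0 = \log_2(n/\ell_0)+1$ roughly-equal terms that would reintroduce a $\log(n/k)$ factor into the SNR requirement. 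I would then bound $\gamma = 1 + \sum_{s=2}^{s_0}s2^{-s} \le 1 + 3/2 = 5/2$ (using $\ell_0 = 4k$), so $8\gamma \le 20$; substituting the hypothesis $x_{\min}^2 \ge 20\frac{n}{M}(\log k + \log(8/\delta))$ gives $c \ge \log(8k/\delta)$, hence $e^{-c}\le \delta/(8k)$ and the geometric sum is at most $\delta$. Finally I would verify the energy identity $\|A\|_{\mathrm{fro}}^2 = M$ directly -- each step-$s$ vector has squared norm $Ms/(\gamma\ell_0 2^{s-1})$, and summing the per-step totals against the definition of $\gamma$ gives exactly $M$ -- and check that $m = \ell_0 + 2k(s_0-1) = 2k\log_2(n/k)$.

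I expect the main obstacle to be the reduction in the second paragraph: making rigorous that only inactive-versus-active comparisons matter (so the bound does not scale like $k^2$), and correctly propagating the conditioning across steps so that the innocuous-looking per-step bound is actually valid given the adaptively constructed $\mathcal{L}$. The choice of threshold $t$ and the constant bookkeeping that lands exactly on $20$ are routine once the geometric-series structure is in place.
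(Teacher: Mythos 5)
Your proposal is correct and follows essentially the same route as the paper's proof: a per-step sufficient condition (all support-containing intervals above the threshold $x_{\min}a_s/2$, all noise-only intervals below it), a union bound over the at most $4k$ measurements per step and over the $s_0$ steps, the Gaussian tail bound, the bound $\gamma\le 5/2$, and the observation that the linear-in-$s$ energy allocation turns the sum over steps into a convergent geometric series. The only cosmetic differences are that the paper uses $2k$ rather than $4k$ in the per-step bound for $s\ge 2$ and sums $\sum_s(\delta/2)^s$ directly instead of closing the geometric series in the form $4k\,e^{-c}/(1-e^{-c})$; both land within the stated constant of $20$.
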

\begin{proof}
See Appendix A.
\end{proof}

The theoretical guaranties of Theorem \ref{thm:kCBS} do not apply to recovery of signals with positive and negative entries, as scenarios can arise where two non-zero components can cancel when measured by the same sensing vector.  In order to avoid this effect, the \emph{initial scale parameter}, $\epsilon$, can be set to a value less than one.  Doing so increases the number of partitions that are created on the first step, but reduces the occurrence of positive and negative signal cancellation.  Theorem \ref{thm:PosNeg} bounds the expected fraction of the components that are recovered when the signal contains positive and negative entries.




\begin{thm} \label{thm:PosNeg} 
Assume $\mathcal{S}$ is chosen uniformly at random.  For any  $\epsilon \in (0,1]$
Alg. \ref{alg:CASS1pass} has $\E[d(\widehat{\S}, \S)] \leq k \epsilon $
 provided 
\begin{eqnarray} \nonumber
x_{\min} \geq    \sqrt{ 20 \frac{n}{M} \left( \log{ k } +2 \log\left( \frac{8}{ \epsilon} \right) \right)  },
\end{eqnarray}
has total sensing energy $||A||_F^2 = M$ and has
\begin{eqnarray} \nonumber
m \leq \frac{8k}{\epsilon} + 2k \log_2 \left(\frac{n}{k}\right)
\end{eqnarray}
measurements. { Moreover, if $k/\epsilon$ is a power of two, the procedure uses exactly $4k/\epsilon + 2k \log_2( n \epsilon/4k)$ measurements.}
\end{thm}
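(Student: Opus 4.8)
The plan is to reduce the signed case to the per-component bisection analysis already underlying Theorem~\ref{thm:kCBS}, isolating the one genuinely new phenomenon: cancellation between two non-zero entries that share a common partition. First I would dispose of the bookkeeping. The first step probes $\ell_0$ partitions and each of the remaining $s_0-1=\log_2(n/\ell_0)$ steps probes exactly $2k$, so $m=\ell_0+2k\log_2(n/\ell_0)$; using $\ell_0=4\cdot 2^{\lceil\log_2(k/\epsilon)\rceil}\le 8k/\epsilon$ together with $\ell_0\ge 4k/\epsilon$ (hence $\log_2(n/\ell_0)\le\log_2(n/k)$ for $\epsilon\le 1$) gives the stated inequality, and when $k/\epsilon$ is a power of two the ceiling is exact, $\ell_0=4k/\epsilon$, collapsing the count to $4k/\epsilon+2k\log_2(n\epsilon/4k)$. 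The energy identity $\|A\|_F^2=M$ is then a direct computation: the step-$s$ energy is $(M/\gamma)$ for $s=1$ and $4kMs/(\gamma\ell_0 2^{s})$ for $s\ge 2$, and these sum to $(M/\gamma)\bigl(1+(4k/\ell_0)\sum_{s=2}^{s_0}s\,2^{-s}\bigr)=M$ by the definition of $\gamma$.

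For the error metric, note that Alg.~\ref{alg:CASS1pass} always returns exactly $k$ distinct indices, so $|\widehat\S|=|\S|=k$ and therefore $d(\widehat\S,\S)=2\,|\S\setminus\widehat\S|$ deterministically; it suffices to show the expected number of missed non-zero entries is at most $k\epsilon/2$. For a fixed non-zero index $i$ I would bound the probability it is missed by the union of the \emph{collision} event that $i$ shares its initial dyadic partition with another non-zero index, and the \emph{search-failure} event that the bisection tracking $i$ drops its partition at some step. The structural fact enabling this split is that the partitions are nested along the dyadic tree: if no other non-zero index lies in $i$'s initial partition, then none lies in any finer partition containing $i$ either, so $i$ stays isolated at every step, no cancellation ever occurs along its path, and its search is statistically the single-component problem analyzed for Theorem~\ref{thm:kCBS}.

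For the collision term I would use the uniform random support: conditioned on $i$'s position, each of the other $k-1$ non-zero indices falls in $i$'s size-$(n/\ell_0)$ initial partition with probability at most $1/\ell_0$, so a union bound gives collision probability at most $(k-1)/\ell_0\le k/\ell_0\le\epsilon/4$, which is exactly where $\ell_0\ge 4k/\epsilon$ enters; summed over the $k$ indices this contributes at most $k\epsilon/4$ misses. For the search-failure term, conditioned on isolation, the binding step is the \emph{first}: it probes all $\ell_0\approx 4k/\epsilon$ partitions with the smallest amplitude $\sqrt{M/(\gamma n)}$, so keeping $i$'s partition in the top $k$ requires a union bound over the $\approx 4k/\epsilon$ noise-only partitions. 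Forcing this first-step failure to be $\lesssim\epsilon/4$ demands a Gaussian tail of order $\epsilon^2/k$, i.e.\ a squared-amplitude scaling like $\log(k/\epsilon^2)=\log k+2\log(1/\epsilon)$, which is precisely the leading term $\log k+2\log(8/\epsilon)$ of the hypothesis, the constant $20$ absorbing $\gamma$, the folded-normal tail, and the top-$k$ bookkeeping; later steps carry only $2k$ competitors but geometrically larger energy and are not binding. This gives per-component search failure $\le\epsilon/4$, a further $k\epsilon/4$ misses, total expected misses $k\epsilon/2$, and hence $\E[d(\widehat\S,\S)]\le k\epsilon$.

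I expect the main obstacle to be making the first-step union bound rigorous in tandem with reusing the later-step argument of Theorem~\ref{thm:kCBS}: for an isolated component one must verify that at each step its partition beats all but at most $k-1$ competitors in the top-$k$ selection, where the competitor set mixes noise-only partitions with the strong partitions of the other $k-1$ components, and one must confirm that the geometric allocation through $\gamma=1+(4k/\ell_0)\sum_{s=2}^{s_0}s\,2^{-s}$ still yields the constant $20$ at the new value $\ell_0=4k/\epsilon$. By contrast, the collision and measurement-count bookkeeping should be routine once the nested-partition observation is in hand.
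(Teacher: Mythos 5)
Your proposal is correct and follows essentially the same route as the paper's proof: isolate each non-zero entry with probability at least $1-\epsilon/4$ via the $\ell_0\geq 4k/\epsilon$ initial partitions (nestedness of the dyadic tree preserving isolation thereafter), then reuse the Theorem~\ref{thm:kCBS} threshold/union-bound analysis with thresholds $\tau_s = x_{\min}a_s/2$, where the $\log(8k/\epsilon)$ union-bound term is absorbed by the extra $2\log(8/\epsilon)$ in the SNR hypothesis and the geometric series sums to $\epsilon/4$. The only difference is bookkeeping: you apply linearity of expectation per component, whereas the paper works with the global count $|\mathcal{I}|$ of isolated entries and the single event $\mathcal{E}=\{|\widehat{\S}\cap\S|<|\mathcal{I}|\}$, using $d(\widehat{\S},\S)\leq 2k$ on $\mathcal{E}$ and $d(\widehat{\S},\S)\leq 2(k-|\mathcal{I}|)$ on $\mathcal{E}^c$ — both decompositions yield the same $k\epsilon$ bound.
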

\begin{proof}
See Appendix B.
\end{proof}

\subsection{Discussion} \label{sec:CASSdis}

Theorem \ref{thm:PosNeg} implies conditions under which a fraction $1-\epsilon$ of the support can be recovered in expectation.  Provided $\epsilon \geq 1/\log n$, the procedure requires order $k \log n$ measurements.  In scenarios where one wishes to control $\mathbb{P}(\widehat{\S} \neq \S)$, $\epsilon$ can be set to satisfy $\epsilon < 1/k$.  In this case, while the SNR requirement remains essentially unchanged, the procedure would require on the order of $k^2$ measurements. In this sense, the compressive case with positive and negative non-zero values is more difficult, a phenomena noted in other work \cite{arias2012detecting}.  Note that when $ k \geq \log n$, the theoretical guarantee of Theorem \ref{thm:PosNeg} adapted to control the family wise error rate requires more than order $k \log n$ measurements.

The minimax lower bound of \cite[Proposition 4.2]{2012arXiv1206.0648C} states the following (over a slightly larger class of signals with sparsity $k$, $k-1$, and $k+1$, with $k<n/2$): if $\E[d(\widehat{\S}, \S)] \leq \epsilon $ then necessarily
\begin{eqnarray} \nonumber
 x_{\min} \geq \sqrt{2\frac{n}{M} \left(\log k + \log\frac{1}{2\epsilon} -1\right)}.
\end{eqnarray}
Since the condition $\mathbb{P}(\widehat{\S} \neq \S) \leq \delta$ in Theorem \ref{thm:kCBS} implies $\E[d(\widehat{\S}, \S)] \leq 2 k \delta$ { (which is readily shown as the procedure always has $d(\widehat{\S}, \S) \leq 2k$, and has $d(\widehat{\S}, \S)=0$ when $\widehat{\S} = \S$)  the upper and lower bounds can be compared directly by setting $\delta = \epsilon/2k$. In doing so, Theorem \ref{thm:kCBS} implies that CASS succeeds with $\E[d(\widehat{\S}, \S)] \leq \epsilon $ provided 
\begin{eqnarray} \nonumber
x_{\min} \geq \sqrt{20\frac{n}{M} \left(2\log k + \log\frac{16}{\epsilon}\right)}.
\end{eqnarray}
In the same manner, Theorem \ref{thm:PosNeg} can be directly compared.  Doing so shows the theorems are tight to within constant factors.  
}

The splitting or bisecting approach in CASS is a common technique for a number search problems, including adaptive sensing.   As discussed in Sec. \ref{sec:priorWork}, to the best of our knowledge, the first work to propose a bisecting search for the compressed sensing problem was \cite{iwen2009group}.  When translated to the framework studied here, the authors in essence allocate the sensing vectors on each bisecting step to be proportional to $\sqrt{M/n}$, as opposed to increasing with the index of the step as in (\ref{eqn:sensAmp}). Likewise, the \emph{compressive binary search} (CBS) algorithm \cite{2012arXiv1202.0937D} also employs sensing vectors that are proportional to $\sqrt{M/n}$. The CBS procedure finds a single non-zero entry with vanishing probability of error as $n$ gets large provided $x_{\min} \geq \sqrt{8 \frac{n}{M} \log \log_2 n}$, where $x_{\min}$ is the amplitude of the non-zero entry.  The authors of \cite{2012arXiv1202.0937D} rightly question whether the $\log\log_2 n$ term is needed.  Thm. \ref{thm:kCBS} answers this question, removing the doubly logarithmic term and coming within a constant factor of the lower bound in \cite{2012arXiv1202.0937D}.  A more direct comparison to the work in \cite{2012arXiv1202.0937D} is found in \cite{2012arXiv1203.1804M}.

Increasing the number of partitions to further isolate sparse entries is an idea suggested in \cite{iwen2009group, iwen2012adaptive}.  In particular, the authors suggest a procedure that creates a large random partitions based on primes, essentially decreasing the support of the sensing vector such that each non-zero element is isolated with arbitrarily high probability.

\section{Numerical Experiments} \label{sec:Numerical}

This section presents a series of numerical experiments aimed at highlighting the performance gains of adaptive sensing.  In all experiments, the CASS procedure was implemented {in the sparsifying basis} according to Alg. \ref{alg:CASS1pass} with inputs as specified.  The signal to noise ratio is defined as
\begin{eqnarray} \nonumber
\mbox{SNR (dB) } = 10 \log_{10} \left(\frac{x_{(k)}^2 M}{n}\right)
\end{eqnarray}
where $x_{(k)}$ is the amplitude of the $k$th largest element of $\bm{x}$.
This definition reflects the SNR of measuring the $k$th largest element using a fraction $1/n$ of the total sensing energy. In all experiments, Gaussian noise of unit variance is added to each measurement, as specified in (\ref{eqn:sensEqn}).

Fig.  \ref{fig:oneSparN}  shows empirical performance of CASS for 1-sparse recovery, and, for comparison, \emph{1)} traditional compressed sensing with orthogonal matching pursuit (OMP) using a Gaussian ensemble with normalized columns and \emph{2)} direct sensing.  As the dimension of the problem is increased, notice that performance of CASS remains constant, while the error probability of both OMP and direct sensing increase.  Note for non-adaptive methods, in the one sparse case OMP is equivalent to the maximum likelihood support estimator, and is thus optimal amongst non-adaptive recovery algorithms.  The total sensing energy is the same in all cases, allowing for fair comparison.  For sufficiently large $n$, CASS outperforms direct sensing.

\begin{figure}[]
\centerline{\includegraphics[width=9.5cm]{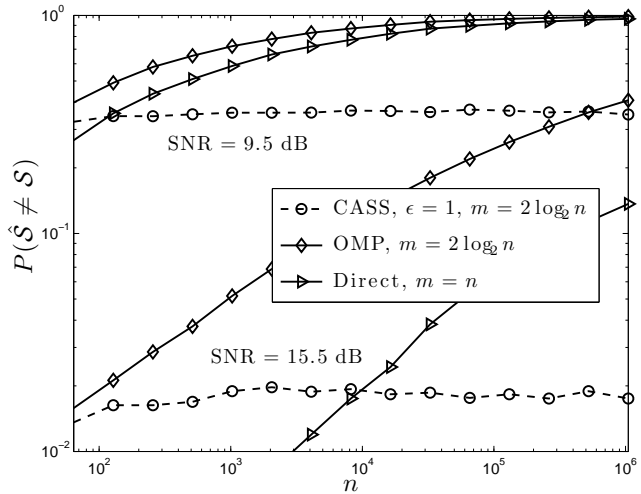}}
\caption{Recovery of one sparse signal ($k=1$).  Empirical performance of CASS (Alg. \ref{alg:CASS1pass}), traditional compressed sensing with orthogonal matching pursuit (OMP), and direct sensing as a function of $n$, for $\mathrm{SNR} (\mbox{dB}) = 10 \log_{10} (x_{\min}^2 M/n) = 9.5$ dB and $\mathrm{SNR} = 15.5$ dB. 10,000 trials for each $n$. \label{fig:oneSparN}}
\end{figure}


Figs. \ref{fig:kSparSNR} and \ref{fig:KSparN} show performance of CASS for $k$-sparse signals with equal magnitude positive and negative non-zero entries.  Performance is in terms of empirical value of $d(\widehat \S, \S)$ averaged over the trials.  The support of the test signals was chosen uniformly at random from the set of $k$-sparse signals, and the non-zero entries were assigned an amplitude of $+x_{\min}$ or $-x_{\min}$ with equal probability. 
The cancellation effect results in an error floor of around $d(\widehat{\S},\S)/(2k) = 0.10$; this is greatly reduced when the scale parameter is set to $\epsilon = 1/8$, clearly visible in both Fig. \ref{fig:kSparSNR} and Fig. \ref{fig:KSparN}.  The procedure is compared against traditional compressed sensing using a random Gaussian ensemble, recovered with LASSO and a regularizer tuned to return a $k$-sparse signal (using SpaRSA, \cite{wright2009sparse}).  Performance of traditional compressed sensing with $m = 2k \log(n/k)$, fairly compared against CASS with $\epsilon =1$, is shown in both figures.  Traditional compressed sensing with {$m = 4k/\epsilon + 2k \log_2 (n \epsilon / 4k)$} is also shown, which is fairly compared to CASS with $\epsilon = 1/8$.

\begin{figure}[]
\centerline{\includegraphics[width=9.5cm]{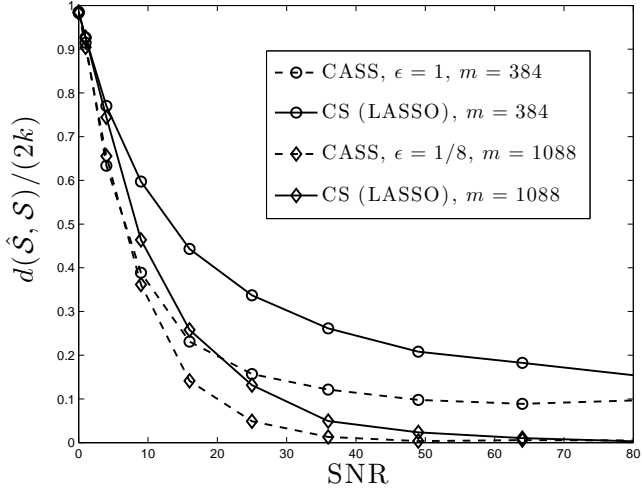}}
\caption{Empirical performance of CASS (Alg. \ref{alg:CASS1pass}) with non-zero entries of $\bx$ positive and negative of equal amplitude as a function of SNR compared against traditional compressed sensing with Gaussian ensemble and recovery using LASSO,  $k = 32$, $n = 2048$.  $\mathrm{SNR} = x_{\min}^2 M/n$. 100 trials for each SNR (linear units).  \label{fig:kSparSNR}}
\end{figure}

\begin{figure}[]
\centerline{\includegraphics[width=9.5cm]{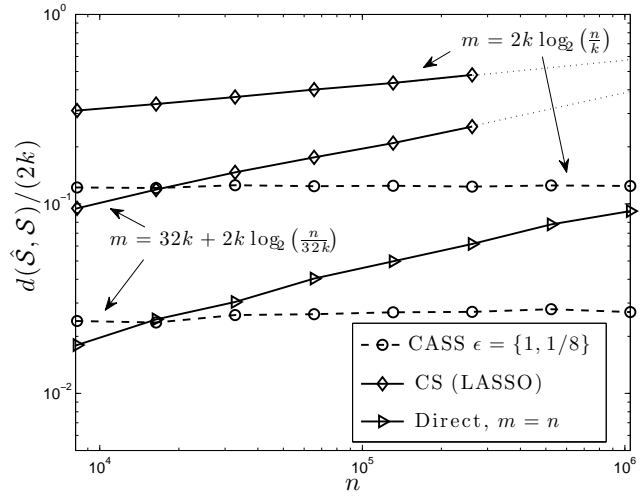}}
\caption{Non-zero entries of equal magnitude, positive and negative, as a function of $n$.  $k = 32$. CASS Alg. \ref{alg:CASS1pass} with $\epsilon = 1$, $m = 2 k \log_2 (n/k)$, and $\epsilon = 1/8$, $m =32k+2k\log_2 (\frac{n}{32k})$.  LASSO using Gaussian ensemble, $m = 2 k \log_2 (n/k)$, $m =32k+2k\log_2 (\frac{n}{32k}) $.   SNR $ = 15.5$ dB. 1000 trials for each $n$.  LASSO was not evaluated for $n\geq 2^{19}$ because of computational intensity. \label{fig:KSparN}}
\end{figure}

In Fig. \ref{fig:KSparN} notice that performance of CASS remains constant as $n$ becomes large.  For sufficiently large $n$, CASS outperforms direct sensing, highlighting a major advantage of adaptive sensing: for support recovery, while standard compressed sensing does not outperform direct sensing in terms of dependence on SNR, for sufficiently sparse problems, CASS does.    As solving LASSO with a dense sensing matrix for large problem sizes becomes computationally prohibitive, performance was only evaluated up to $n = 2^{18}$.  For all dimensions evaluated, compressed sensing using LASSO was inferior to CASS and direct sensing.

 \begin{figure}[H]
\begin{tabular}{l}
\centerline{\includegraphics[width=10.1cm]{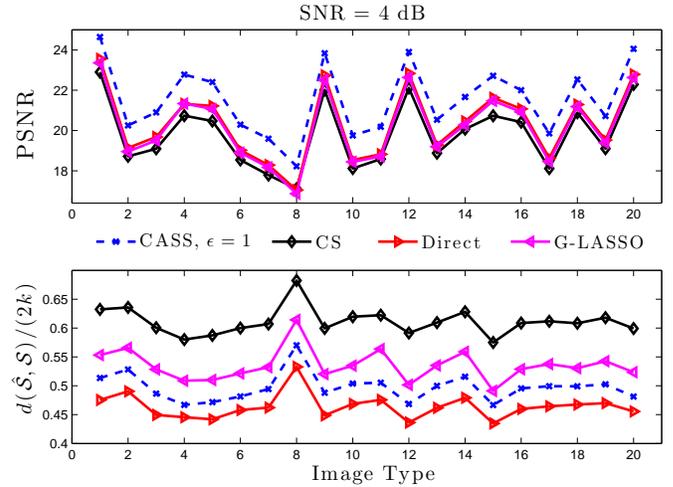}} \vspace{.3cm} \\
\centerline{\includegraphics[width=10.1cm]{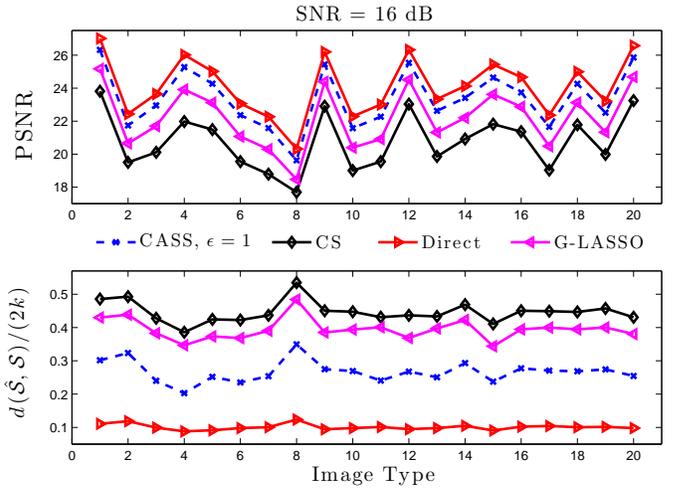}} \\
\end{tabular}
\caption{\label{fig:MSdb} Performance of CASS, standard compressed sensing with LASSO and Gaussian ensemble, Group LASSO of \cite{rao2011convex} (G-LASSO) and Gaussian ensemble, which exploits signal structure, and direct sensing on the Microsoft Research Object Class Recognition database \cite{4028553} for two SNRs.    The database consists of 20 image classes, with approximately 30 images per class.  The images are approximately sparse in the Haar basis.  The total sensing energy was the same for all methods. $n = 128^2$.  The results are averaged over each class.  $k = 512$, $\epsilon =1$.    }
\end{figure}

Fig. \ref{fig:MSdb} shows results of running CASS on the 20 image classes in the Microsoft Research Object Class Recognition database \cite{4028553}.  Each class consists of approximately 30 images of size $128 \times 128$.  Results were averaged over all images in a particular class.  Direct sensing, standard compressed sensing using a Gaussian Ensemble and the group LASSO  (G-LASSO)  of \cite{rao2011convex, jacob2009group} which exploits signal structure, are included.   G-LASSO is known to be one of the best non-adaptive methods for recovery of natural images, making it a natural benchmark for comparison (see \cite[Fig. 5]{rao2012correlated}).   The total sensing energy was the same in all cases, and all methods used $m = 5120$ measurements (except direct sensing which requires $m = 16,384$ measurements).  Each algorithm was tuned to output a $512$ coefficient approximation.   
CASS universally outperforms both G-LASSO and  standard CS on all image sets and SNRs evaluated.

Fig. \ref{fig:eights} shows recovery of a test image from \cite{FrankAsuncion:2010} for three SNRs.  The image is \emph{approximately} sparse in the Daubechies wavelet domain (`db4' in MATLAB).   Alg. \ref{alg:CASS1pass} was run with $k = 256$ as input and compared against LASSO with a random Gaussian ensemble, and the regularizer was tuned to return $k = 256$ elements  (again using SpaRSA, \cite{wright2009sparse}).  Using traditional compressed sensing, at an SNR of $ -8$ dB, the reconstructed image is essentially un-recognizable; at the same SNR, the image recovered with CASS is still very much recognizable.   The peak-signal-to-noise ratio (PSNR) for images on $[0,1)$ is defined as 
\begin{eqnarray} \nonumber
\mbox{PSNR (dB) } = -10 \log_{10} \left( \frac{|| \bm{x} -\widehat{\bm{x}}||_2^2}{n } \right)
\end{eqnarray}
where $\widehat{\bm{x}}$ is an estimate of the image.  

The experiments of Fig. \ref{fig:eights} were repeated 10 times, and the empirical results averaged over the 10 trials are shown in Table \ref{table:Digits}.  In terms of PSNR, CASS outpeforms compressed sensing by 4-8 dB.  The runtime of the simulation, which consists of the entire simulation time, including both measurement and recovery, is included.  While the compressed sensing experiments were not optimized for speed, it seems unlikely any recovery method using dense Gaussian matrices would approach the runtime of CASS -- standard compressed sensing runtime was on the order of a minute, while runtime of CASS was approximately $0.1$ seconds.  

Fig. \ref{fig:CameraMan2} shows side-by-side performance of CASS, traditional compressed sensing with a dense Gaussian sensing matrix and LASSO tuned to return $k = 2048$ non-zero components, and direct sensing on a natural image.   In terms of PSNR, CASS outperforms compressed sensing by 2-4 dB.  In terms of symmetric set difference, CASS substantially outperforms compressed sensing on all SNRs evaluated.

\begin{table}
\centering 
\caption{Handwritten Eight, 10 trials, $n=256^2$, $k =256$, $\epsilon =1$ \label{table:Digits}}
\begin{tabular}{ p{2.1cm}||p{1.7cm}|p{1.6cm}|p{1.5cm}}
& CASS (Alg. \ref{alg:CASS1pass}) & CS (LASSO) & Direct \\ \hline \hline
 \underline{$\mathrm{SNR} = -2$ dB }  & & & \\ 
 \hfill $d(\widehat{\S}, \S)$ &   120.0 & 153.2 & 131.0  \\
\hfill PSNR (dB) & $16.88$ & $8.43$ & $12.71$ \\
\hfill runtime (s) & 0.11 & 118.3  & 0.04
\\ \hline
\underline{$\mathrm{SNR} = -8$ dB}  & & & \\ 
 \hfill $d(\widehat{\S}, \S)$ &   152.0 & 202.0 & 163.8 \\
\hfill PSNR (dB)  & 12.91 & 4.40 &  6.15 \\
\hfill runtime (s) & 0.11 & 65.0 & 0.04
\\ \hline
\underline{$\mathrm{SNR} = -14$ dB}  & & & \\ 
 \hfill $d(\widehat{\S}, \S)$ &  236.4 & 379.6 & 301.6 \\
\hfill PSNR (dB) & 6.69 & 2.52 & -1.76  \\
\hfill runtime (s) & 0.11 & 54.2 & 0.03 \\
\end{tabular}
\end{table}

\begin{figure*}[]
\centerline{ 
\begin{tabular}{p{.6cm}p{5.2cm}p{5.2cm}p{5.2cm}}
& \centerline{CASS (Alg. \ref{alg:CASS1pass})}  & \centerline{CS (LASSO)} & \centerline{Direct} \\ 
\begin{sideways}  \hspace{1.4cm}  $\mathrm{SNR} = -2$ dB \end{sideways}  & 
\centerline{\includegraphics[width=2.5in]{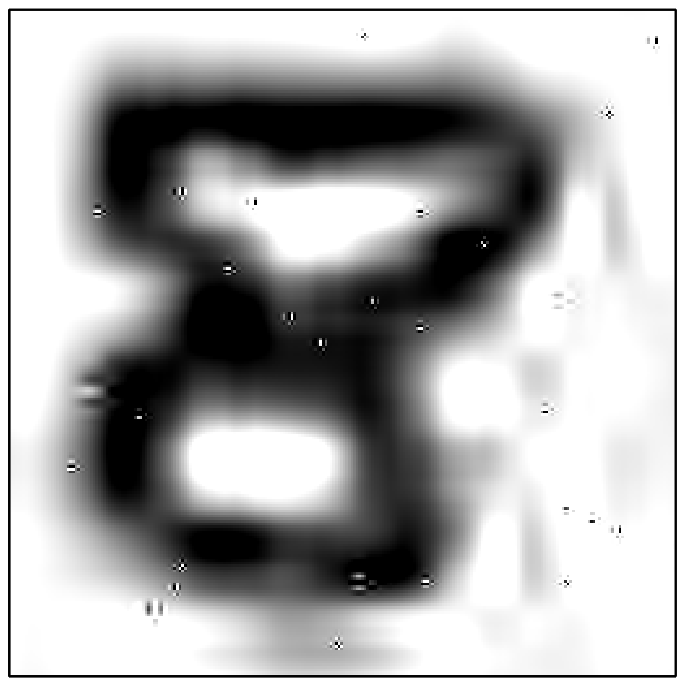} }  \footnotesize\vspace{-.7cm}\centerline{$d(\widehat{\S},\S) = 112 \quad $  PSNR $= 17.5$ dB} \centerline{simulation time $0.13$s} &    
\centerline{\includegraphics[width=2.5in]{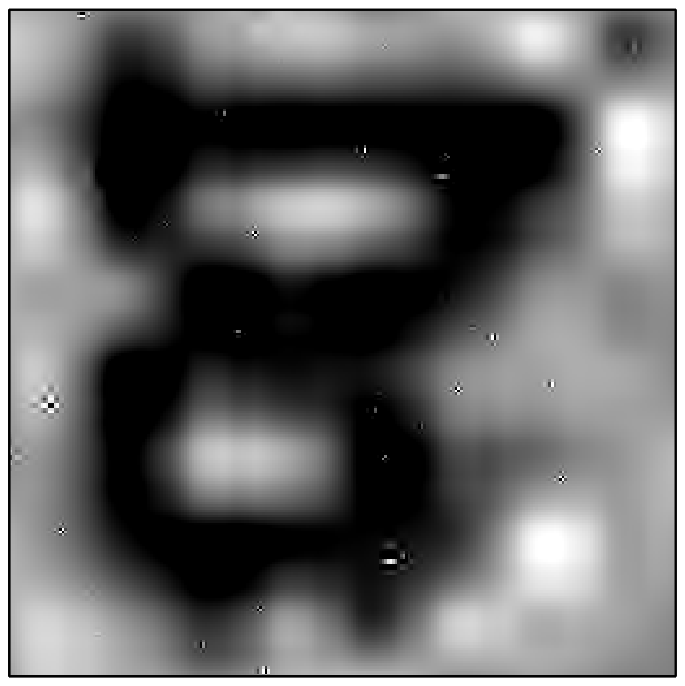} } \footnotesize\vspace{-.7cm}\centerline{$d(\widehat{\S},\S) = 152 \quad $  PSNR $= 8.6$ dB} \centerline{simulation time $69$s} & 
\centerline{\includegraphics[width=2.5in]{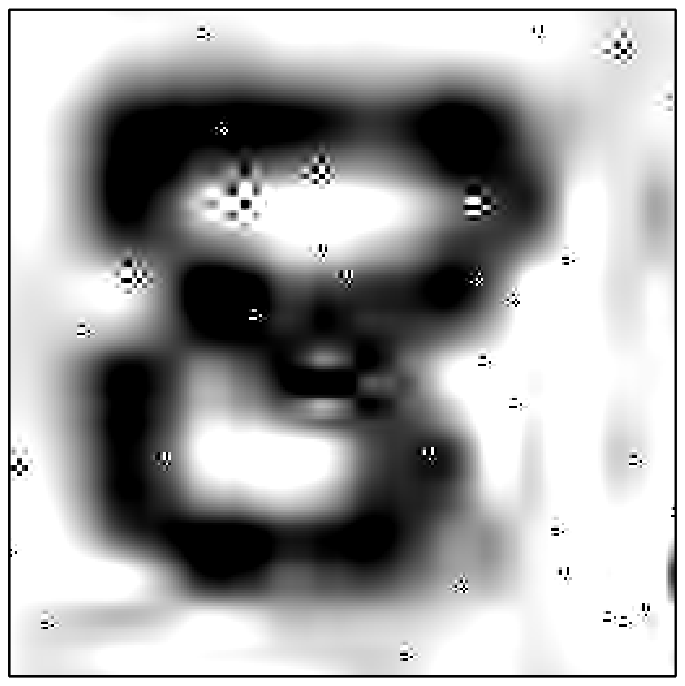} } \footnotesize\vspace{-.7cm}\centerline{$d(\widehat{\S},\S) = 134 \quad $  PSNR $=  12.3$ dB} \centerline{simulation time $0.05$s}
\\
\begin{sideways}  \hspace{1.4cm}  $\mathrm{SNR} = -8$ dB \end{sideways}  & 
\centerline{\includegraphics[width=2.5in]{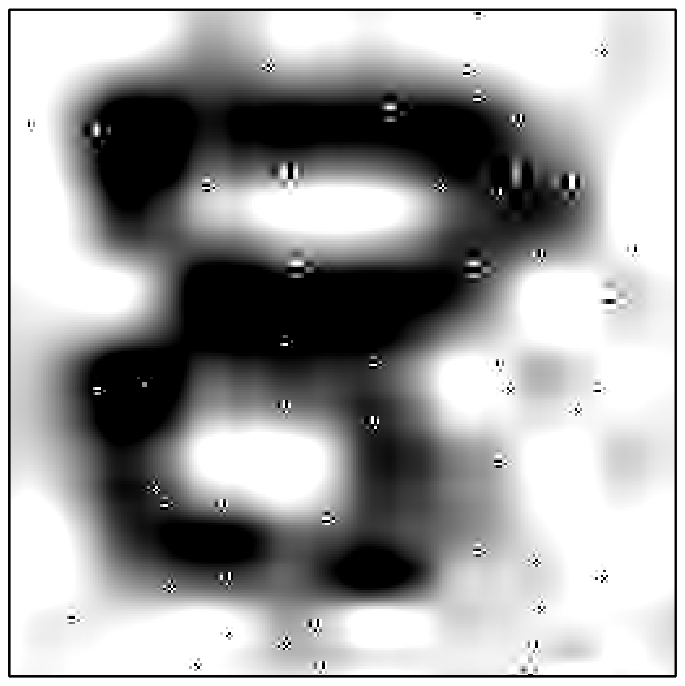}} \footnotesize\vspace{-.7cm}\centerline{$d(\widehat{\S},\S) = 156 \quad $  PSNR $= 12.9$ dB} \centerline{simulation time $0.10$s}  &     
\centerline{\includegraphics[width=2.5in]{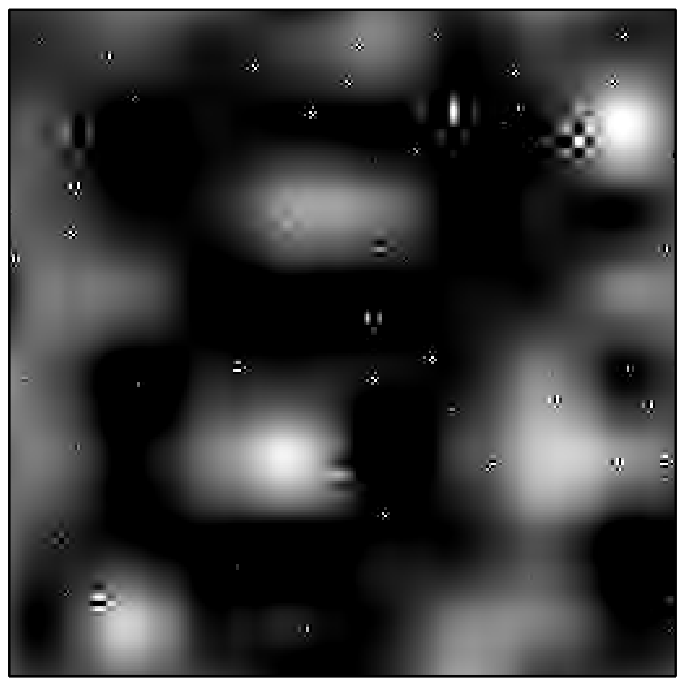} } \footnotesize\vspace{-.7cm}\centerline{$d(\widehat{\S},\S) = 198 \quad $  PSNR $= 4.6 $ dB} \centerline{simulation time $71$s} & 
\centerline{\includegraphics[width=2.5in]{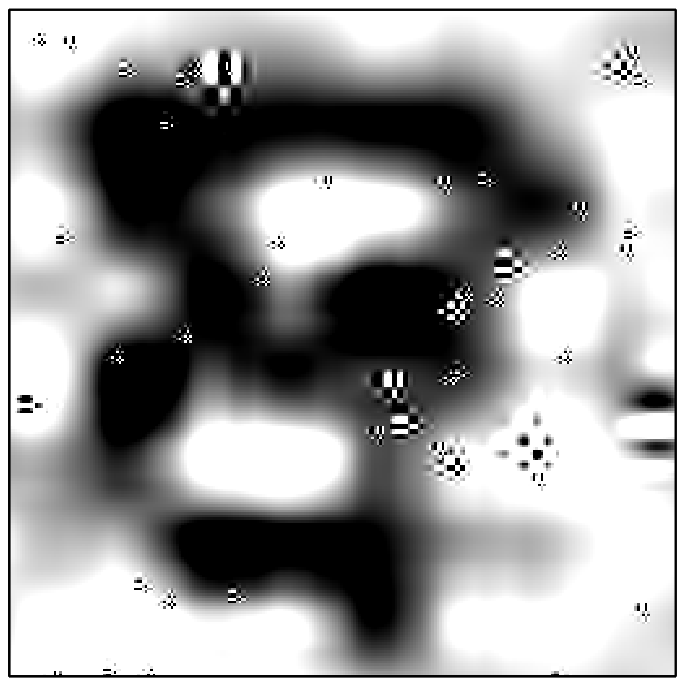} } \footnotesize\vspace{-.7cm}\centerline{$d(\widehat{\S},\S) = 162 \quad $  PSNR $= 5.9$ dB} \centerline{simulation time $0.04$s}
\\
\begin{sideways}  \hspace{1.4cm} $\mathrm{SNR} = -14$ dB \end{sideways} & 
\centerline{\includegraphics[width=2.5in]{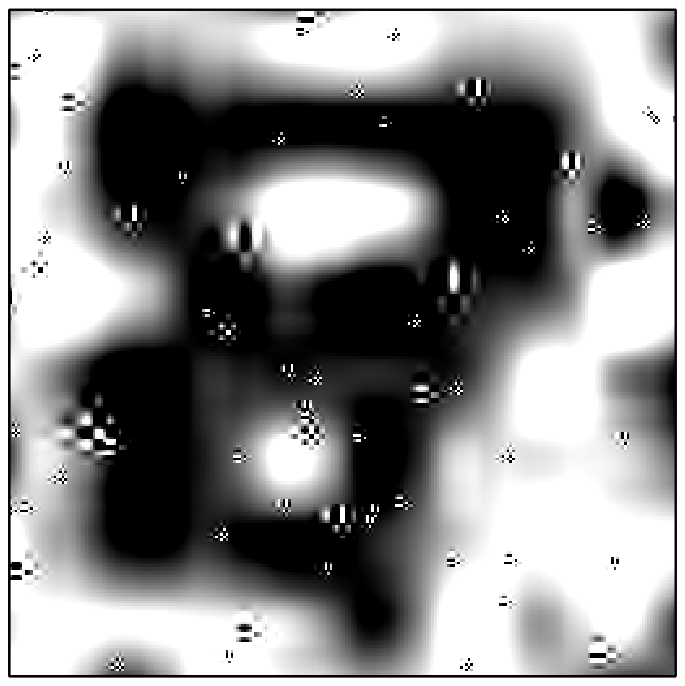} } \footnotesize\vspace{-.7cm}\centerline{$d(\widehat{\S},\S) = 232 \quad $  PSNR $= 7.29$ dB} \centerline{simulation time $0.12$s}  &   
\centerline{\includegraphics[width=2.5in]{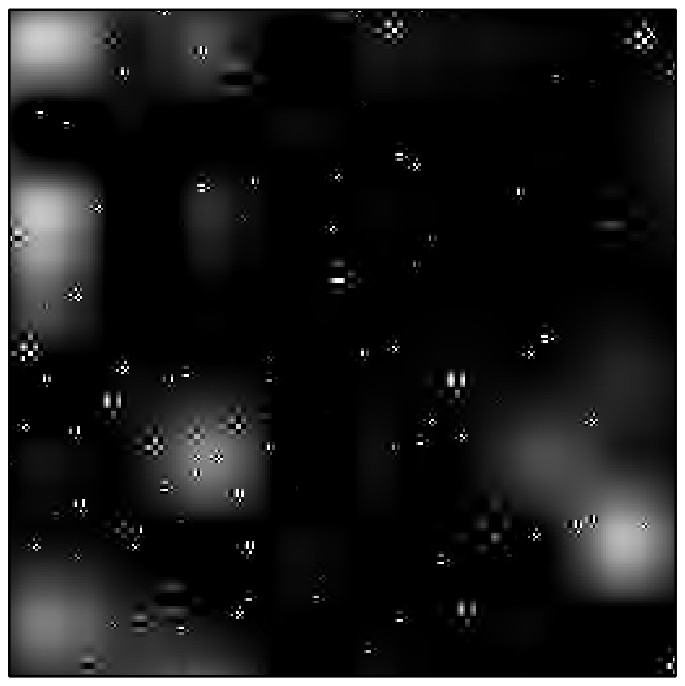} } \footnotesize\vspace{-.7cm}\centerline{$d(\widehat{\S},\S) = 380 \quad $  PSNR $= 2.67$ dB} \centerline{simulation time $94$s} &  
\centerline{\includegraphics[width=2.5in]{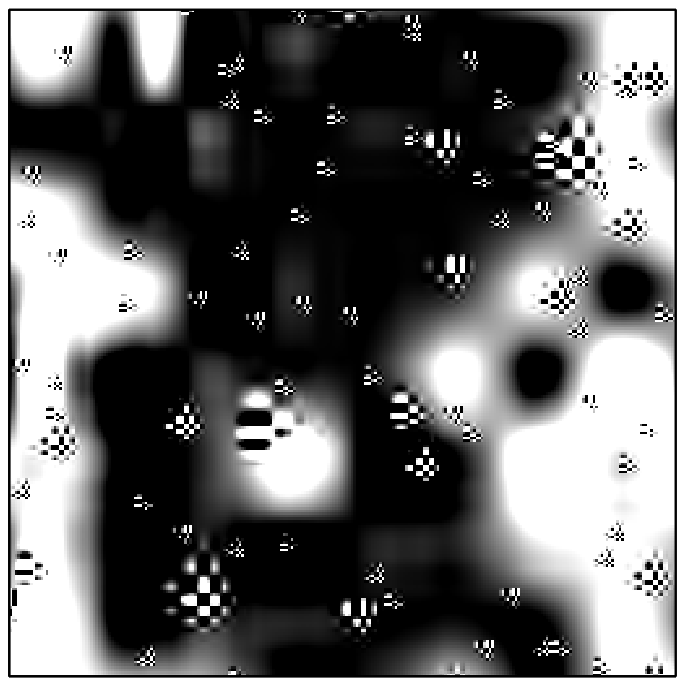} } \footnotesize\vspace{-.7cm}\centerline{$d(\widehat{\S},\S) = 310 \quad $  PSNR $= -1.94$ dB} \centerline{simulation time $0.04$s}
\\
\end{tabular}} 
\caption{
Handwritten eight \cite{FrankAsuncion:2010}, approximately sparse in `dB4' wavelet basis, $n= 256^2$.  Recovery with CASS (Alg. \ref{alg:CASS1pass}), $k =  256$, $\epsilon =1$, as input, {$m = 4096$} (column 1). Traditional compressed sensing with Gaussian ensemble and LASSO (using SpaRSA \cite{wright2009sparse}), tuned to return $k = 256$ components, {$m = 4096$} (column 2).  Direct sensing, $m=n$, using best $k=256$ term approximation (column 3).   Each method evaluated for three SNRs, $\mathrm{SNR} (\mbox{dB}) = 10 \log_{10} (x_{(k)}^2 M/n)$, where $x_{(k)}$ is the amplitude of the $k$th largest component. \label{fig:eights}}
\end{figure*}

\begin{figure*}[]
\centerline{ 
\begin{tabular}{p{.7cm}p{5.2cm}p{5.2cm}p{5.2cm}}
& \centerline{CASS (Alg. \ref{alg:CASS1pass})}  & \centerline{CS (LASSO)} & \centerline{Direct} \\ 
\begin{sideways}  \hspace{1.7cm} $\mathrm{SNR} = 16$ dB \end{sideways} & 
\centerline{\includegraphics[width=2.5in]{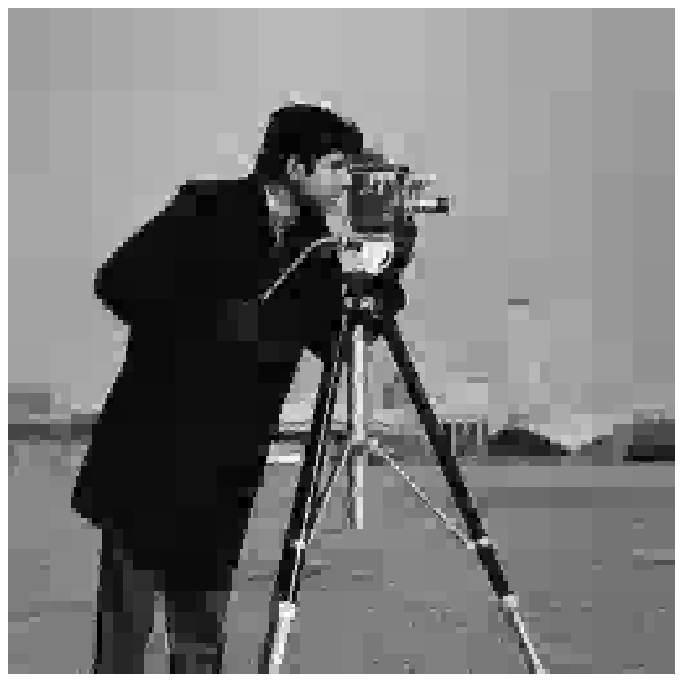} } \footnotesize\vspace{-.7cm}\centerline{$d(\widehat{\S},\S) = 736 \quad $  PSNR $= 25.5$ dB} \centerline{simulation time $0.35$s}  &   
\centerline{\includegraphics[width=2.5in]{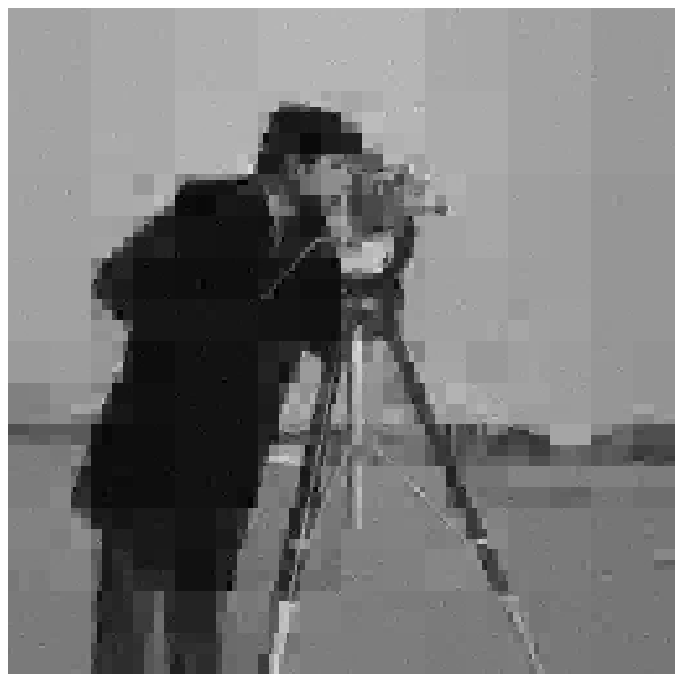} } \footnotesize\vspace{-.7cm}\centerline{$d(\widehat{\S},\S) = 1504 \quad $  PSNR $= 21.8$ dB} \centerline{simulation time $312$s} &  
\centerline{\includegraphics[width=2.5in]{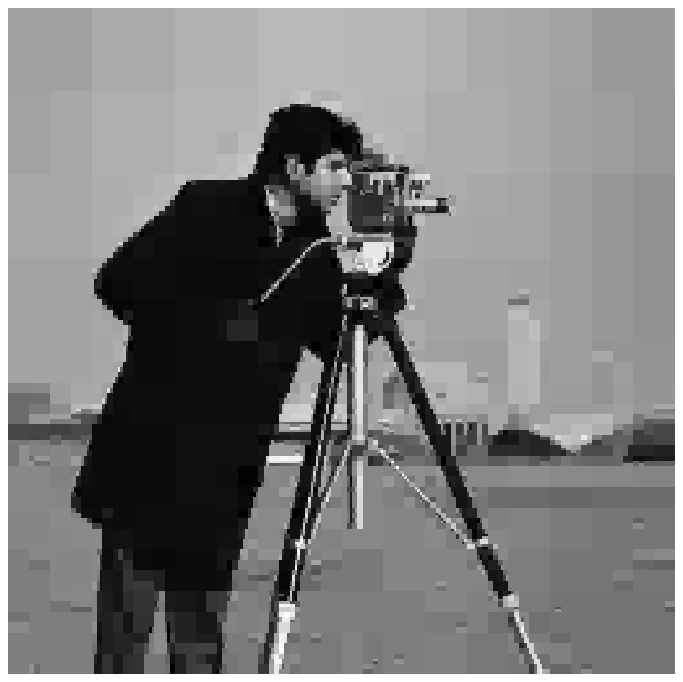} } \footnotesize\vspace{-.7cm}\centerline{$d(\widehat{\S},\S) = 316 \quad $  PSNR $= 26.2$ dB}  \centerline{simulation time $0.06$s}
\\
\begin{sideways}  \hspace{1.7cm}  $\mathrm{SNR} = 10$ dB \end{sideways}  & 
\centerline{\includegraphics[width=2.5in]{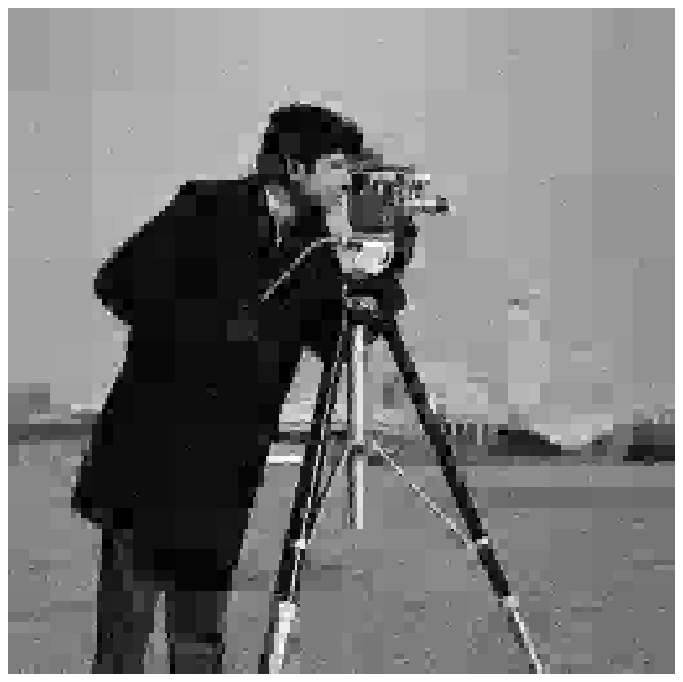} }  \footnotesize\vspace{-.7cm}\centerline{$d(\widehat{\S},\S) = 1100 \quad $  PSNR $= 24.6$ dB} \centerline{simulation time $0.34$s} &    
\centerline{\includegraphics[width=2.5in]{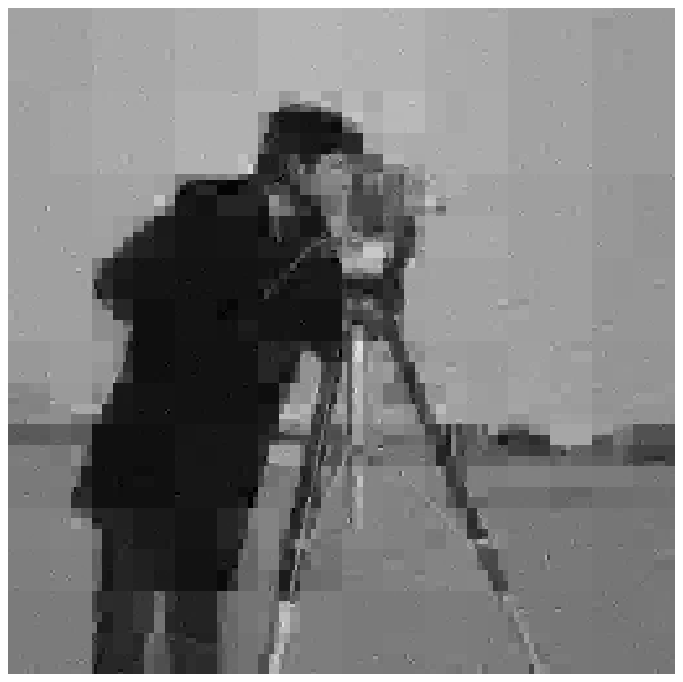} } \footnotesize\vspace{-.7cm}\centerline{$d(\widehat{\S},\S) = 1774 \quad $  PSNR $= 21.3$ dB} \centerline{simulation time $327$s} & 
\centerline{\includegraphics[width=2.5in]{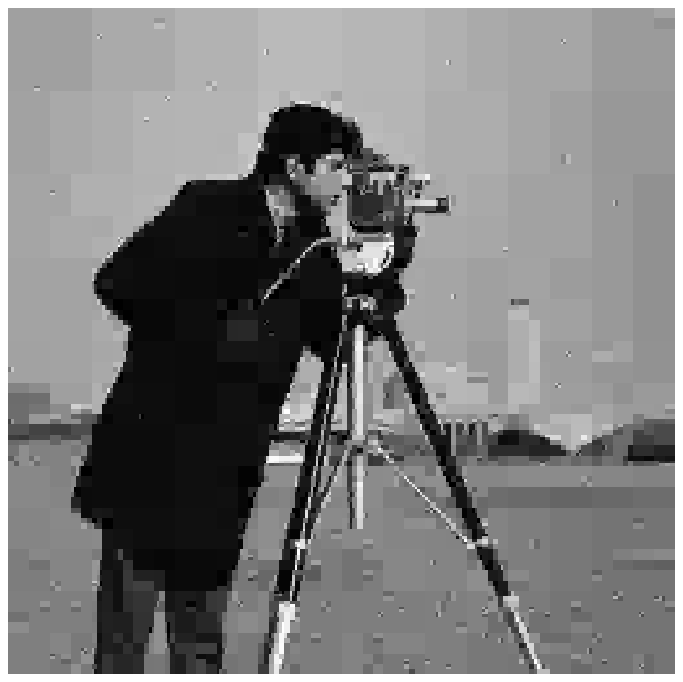} } \footnotesize\vspace{-.7cm}\centerline{$d(\widehat{\S},\S) = 714 \quad $  PSNR $=  25.3$ dB} \centerline{simulation time $0.06$s}
\\
\begin{sideways}  \hspace{1.7cm}  $\mathrm{SNR} = 4$ dB \end{sideways}  & 
\centerline{\includegraphics[width=2.5in]{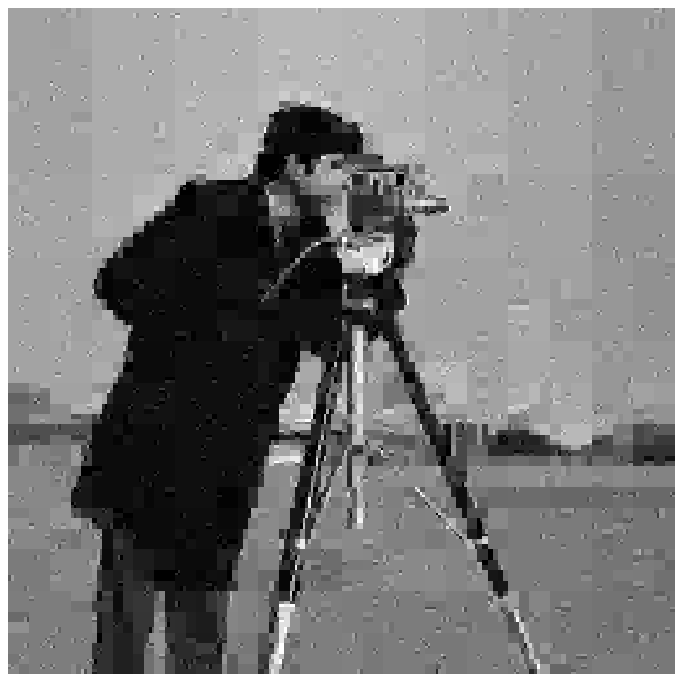}} \footnotesize\vspace{-.7cm}\centerline{$d(\widehat{\S},\S) = 1836 \quad $  PSNR $=22.7$ dB} \centerline{simulation time $0.36$s}  &     
\centerline{\includegraphics[width=2.5in]{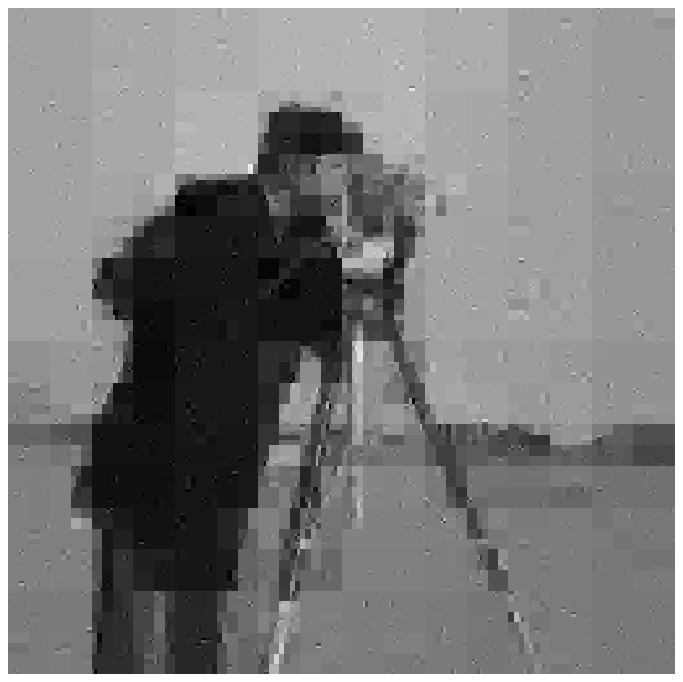} } \footnotesize\vspace{-.7cm}\centerline{$d(\widehat{\S},\S) = 2304 \quad $  PSNR $= 20.5$ dB}   \centerline{simulation time $232$s} & 
\centerline{\includegraphics[width=2.5in]{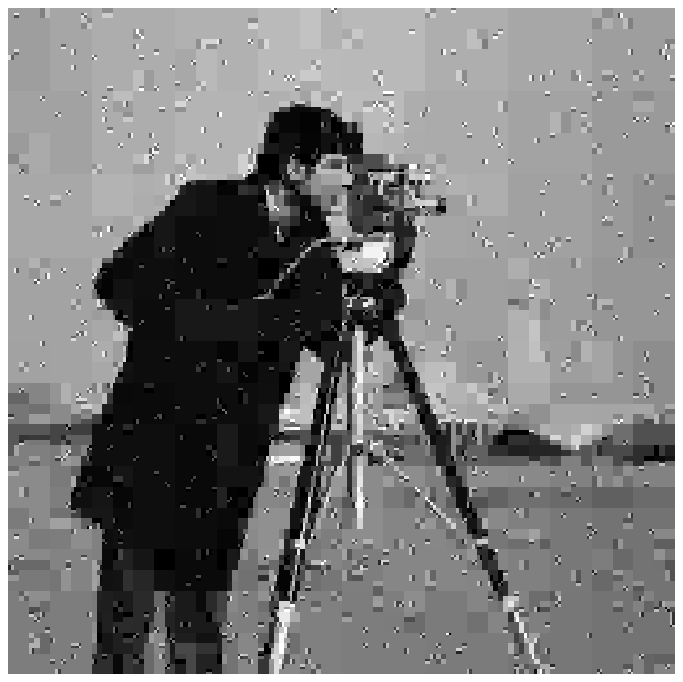} } \footnotesize\vspace{-.7cm}\centerline{$d(\widehat{\S},\S) = 1694 \quad $  PSNR $= 21.4$ dB} \centerline{simulation time $0.05$s}
\\
\end{tabular}} 
\caption{Cameraman, approximately sparse in Haar basis, $n = 256^2$.  Recovery with CASS, Alg. \ref{alg:CASS1pass},  $k = 2048$, $\epsilon = 1$, as input, $m = 20480$ (column 1).  Traditional compressed sensing with Gaussian ensemble and recovery with LASSO  (using SpaRSA \cite{wright2009sparse}), regularizer tuned to return $k = 2048$ components, $m = 20480$ (column 2).  Direct sensing, $m=n$, using best $k=2048$ term approximation (column 3).   Each method evaluated for three SNRs, $\mathrm{SNR} (\mbox{dB}) = 10 \log_{10} (x_{(k)}^2 M/n)$, where $x_{(k)}$ is the amplitude of the $k$th largest component.   \label{fig:CameraMan2}}
\end{figure*}

\vspace{.6cm}
\subsection{Discussion}

The leading constant in the upper bound -- a factor of 20 required by the CASS procedure --  is significant in practice. 
 Roughly speaking, traditional direct sensing requires $x_{\min} \geq \sqrt{ 2 \frac{n}{M} \log n}$, while CASS requires $x_{\min} \geq \sqrt{ 20 \frac{n}{M} \log k}$.  One would then conclude that anytime the signal is sufficiently sparse, specifically, if $k \leq n^{{1}/{10}}$, the CASS procedure should outperform direct sensing.  Notice that in simulation CASS outperforms direct sensing on signals where $k \geq n^{1/10}$, implying the leading constant of 20 is loose.  
\vspace{1.3cm}

\section{Conclusion}	
This work presented an adaptive compressed sensing and group testing procedure for recovery of sparse signals in additive Gaussian noise, termed CASS. The procedure was shown to be near-optimal in that it comes within a constant factor of the best possible dependence on SNR, while requiring on the order of $k \log n$ measurements.  Simulation showed that the theoretical developments hold in practice, including scenarios in which the signals is only approximately sparse. 

While many of the questions regarding the gains of adaptivity for sparse support recovery have been settled, small questions remain. {First, it would be of practical interest to further reduce the leading constant in the SNR requirement of CASS.} The question of whether or not \emph{exact} support recovery is possible in full generality (positive and negative non-zero entries, any level of sparsity), with SNR scaling as $\log k$, and the number of measurements scaling as  $k \log n$, is still outstanding.   {Other questions, such as the applicability of CASS to signals which are sparse in overcomplete dictionaries, are also interesting future research directions. }
\vspace{3.1cm}

\section*{Appendix A}
\emph{Proof of Theorem \ref{thm:kCBS}:}
The total number of measurements is given as 
\begin{eqnarray} \nonumber
m &=& \ell_0 + 2k ( s_0- 1 )  = 2k \log_2 \left( \frac{n}{k} \right)
\end{eqnarray}
since the first step requires $\ell_0$ measurements, and the remaining $s_0-1$ steps each require $2k$ measurements.  The equality follows as $\ell_0 = 4k$.
The constraint in (\ref{eqn:sensEngr}) is satisfied:
\begin{eqnarray} \nonumber
||A||_{\mathrm{fro}}^2 &=&  \sum_{\ell=1}^{\ell_0} ||\bm{a}_{1,\ell}||^2 +  \sum_{s=2}^{s_0} \sum_{\ell=1}^{2k}  ||\bm{a}_{s,\ell}||^2 \\ \nonumber
& = & \ell_0 \left(\frac{n}{\ell_0}\right) \left( \frac{M}{\gamma n} \right) + 2k \sum_{s=2}^{s_0}  \left(\frac{n }{\ell_0 2^{s-1}}\right) \left(\frac{Ms}{\gamma n} \right) \\
&=&  \frac{M}{\gamma}\left(1 +  \frac{4k}{\ell_0}\sum_{s=2}^{s_0} {s2^{-s} }\right) = M
\end{eqnarray}
where the second equality follows as each sensing vector has support $n2^{-(s-1)}/\ell_0$ and amplitude $Ms/(\gamma n)$ on step $s$.

\begin{figure*}[!t]
\normalsize
\newcounter{MYtempeqncnt}
\setcounter{MYtempeqncnt}{\value{equation}}
\setcounter{equation}{5}
\begin{eqnarray} \nonumber
\hspace{-.8cm} \P\left(\widehat{\S} \neq \S \right) 
&\leq& \sum_{j=1}^{t_1}  \P \left( |y_{1,j}| \leq \frac{x_{\min} a_1}{2}\right)  + \sum_{j=t_1+1}^{4k} \P \left( |y_{1,j}| \geq \frac{x_{\min} a_1}{2} \right)   
\\  \label{eqn:long1}
&& \qquad \qquad \qquad 
+\sum_{s=2}^{s_0}  \left(\sum_{j=1}^{t_s}  \P \left( |y_{s,j}| \leq \frac{x_{\min} a_s}{2}\right)  \right.  +\left.
\sum_{j=t_s+1}^{2k} \P \left( |y_{s,j}| \geq \frac{x_{\min} a_s}{2} \right)\right)
\\
 \label{eqn:long2}
 &\leq & 4k \exp \left( -\frac{x_{\min}^2 a_1^2}{8} \right)  + \sum_{s=2}^{s_0}  2k    \exp \left( -\frac{x_{\min}^2 a_s^2}{8} \right)  
\\
\label{eqn:long3}
&\leq&  \sum_{s=1}^{s_0} \exp\left(-\frac{x_{\min}^2  Ms }{8 \gamma n} + \log 4 k\right) 
\leq  \sum_{s=1}^{s_0} \exp\left(-\frac{x_{\min}^2  Ms }{20n} + \log 4 k\right) 
\end{eqnarray}
\setcounter{equation}{\value{MYtempeqncnt}}
\addtocounter{equation}{3}
\hrulefill
\vspace*{4pt}
\end{figure*}

A sufficient conditions for exact recovery of the support set is that the procedure never incorrectly eliminate a measurement corresponding to a non-zero entry on any of the $s_0$ steps.    Let $y_{s,1},\dots,y_{s,t_s}$ be measurements corresponding to sensing vectors with one or more non-zero entries in their support, and $y_{s,t_s+1},\dots$ be measurements corresponding to the zero entries on step $s$. Consider a series of thresholds denoted $\tau_s$, $s = 1,\dots, s_0$.  If  $|y_{s,1}|,\dots,|y_{s,t_s}|$ are all greater than $\tau_s$, and  $|y_{s,t_s+1}|,\dots$ are all less then $\tau_s$, for all $s$, the sufficient condition is implied.  Taking the complement of this event,  
and from a union bound,
\begin{eqnarray} \nonumber
\nonumber
\P(\widehat{\S} \neq \S) &\leq&   \P \left(\bigcup_{j=1}^{t_1} \left\{|y_{1,j}| \leq \tau_1 \right\} \cup \bigcup_{j = t_1+1}^{\ell_0} \left\{ | y_{1,j}| \geq \tau_1 \right\} \right) \\ \nonumber
 && \hspace{-1cm} +  \sum_{s=2}^{s_0}   \P \left(\bigcup_{j=1}^{t_s} \left\{|y_{s,j}| \leq \tau_s \right\} \cup \bigcup_{j= t_s+1}^{2k} \left\{ | y_{s,j}| \geq \tau_s \right\} \right)
\end{eqnarray}
for any series of thresholds $\tau_s$.  As all non-zero elements are positive by assumption, and the signals combine constructively, it is straightforward to see $y_{s,1},\dots,y_{s,t_s}$ are normally distributed with mean greater than $x_{\min} a_s$ and unit variance.  On the other hand, $y_{s,t_s+1},\dots$ are normally distributed with mean zero and unit variance.  Setting the thresholds at $\tau_s = x_{\min} a_s/2$, where $a_s$ is the amplitude of the non-zero entries of the sensing vector on step $s$, gives the series of equations in (\ref{eqn:long1}) - (\ref{eqn:long3}).

Here, (\ref{eqn:long1}) follows from the union bound. Equation (\ref{eqn:long2}) follows from a standard Gaussian tail bound: $1-F_{\mathcal{N}}(x) \leq\frac{1}{2} \exp (-x^2/2)$ for $x\geq 0$, where $F_{\mathcal{N}}(\cdot)$ is the Gaussian cumulative density function. Lastly, (\ref{eqn:long3}) follows as $\gamma$ is always less than $5/2$:
\begin{eqnarray} \label{eqn:gammabnd}
\gamma = 1 + 4k/\ell_0 \sum_{s=2}^{s_0} s 2^{-s} \leq \frac{5}{2}
\end{eqnarray}
as the sum is arithmetico-geometric, and $\ell = 4k$.   Setting 
\begin{eqnarray} \nonumber
x_{\min} &\geq& \sqrt{20 \frac{n}{M} \left(  \log k  + \log \left( \frac{8}{\delta}\right)  \right)} 
\end{eqnarray}
in (\ref{eqn:long3}) gives $\mathbb{P}(\widehat{\S} \neq \S) \leq  \sum_{s=1}^{s_0}  \left(2/\delta \right)^{-s}  \leq \delta$, completing the proof.
\begin{eqnarray}
\quad \nonumber
\end{eqnarray}

\section*{Appendix B}
\emph{Proof of Theorem \ref{thm:PosNeg}.}

First, we confirm that the sensing budget is satisfied.  In the same manner as before  
\begin{eqnarray} \nonumber
||A||_{\mathrm{fro}}^2 & =&  \sum_{\ell=1}^{\ell_0}  ||\bm{a}_{1,\ell}||^2  +  \sum_{s=2}^{s_0} \sum_{\ell=1}^{2k}  ||\bm{a}_{s,\ell}||^2   \\ \nonumber
&=&  \frac{M}{\gamma}\left(1 +  \frac{4k}{\ell_0}\sum_{s=2}^{s_0} {s2^{-s} }\right) = M .
\end{eqnarray}
 The total number of measurement satisfies 
\begin{eqnarray} \nonumber
m &=& \ell_0 + 2 k (s_0 - 1) \leq 8k/\epsilon + 2 k \log_2 (n/k)
\end{eqnarray}
where the inequality follows as {$\min \{n, 4 k/\epsilon\} \leq \ell_0 \leq 4 \cdot 2^{\lceil \log_2(k/\epsilon) \rceil} \leq 8 k/\epsilon$}.

We proceed by bounding the expected number of non-zero components that are isolated when the signal is partitioned, and then show these indices are returned in $\widehat{\S}$ with high probability.  We assume the support of the signal is chosen uniformly at random from all possible $k$-sparse supports sets (note this is equivalent to randomly permuting the labels of all indices).    Conceptually, the algorithm divides the signal support into $\ell_0$ disjoint support sets, where $\ell_0 = \min\{n, \mbox{next power of 2 greater than } 4k/\epsilon\}$.  
Let the set $\mathcal{I} \subset \{1,\dots,\ell_0 \}$ index the disjoint support sets that contain exactly one index corresponding to a non-zero entry:
\begin{eqnarray} \nonumber
\mathcal{I} := \left\{  \ell \in \{1,\dots,\ell_0\} : |\mathcal{J}_{\log_2{\ell_0},\ell }\cap \S|=1 \right\}.
%
\end{eqnarray}
Then
\begin{eqnarray}  \nonumber
\E[|\mathcal{I}|] &=&   \E\left[\sum_{\ell=1}^{\ell_0} \mathrm{I}_{\{\ell \in \mathcal{I} \}} \right]  =\sum_{\ell=1}^{\ell_0} \E\left[  \mathrm{I}_{\{\ell \in \mathcal{I} \}} \right] \\ \nonumber
&=& \sum_{\ell = 1}^{\ell_0} \P {\left( \ell \in \mathcal{I} \right)}  \nonumber  =\sum_{\ell=1}^{\ell_0} \; \frac{k}{\ell_0} \left(\frac{\ell_0-1}{ \ell_0} \right)^{k-1} \\ \nonumber
& =& k \left(\frac{\ell_0-1}{\ell_0} \right)^{k-1} 
\geq  k \frac{\ell_0 - k}{\ell_0}  \nonumber
\end{eqnarray}
where  $\mathrm{I}_{\{\cdot\}}$ is the indicator function. Since $\ell_0 \geq \min \{ 4k/\epsilon,n\}$ we have
\begin{eqnarray} \label{eqn:expbnd}
\E[|\mathcal{I}| ] \geq k(1-\epsilon/4).
\end{eqnarray}
Let $\mathcal{E}$ be the event that the procedure fails to find at least as many true non-zero elements as the number that were isolated on the first partitioning, specifically,  
\begin{eqnarray} \nonumber
\mathcal{E} = \{ |\widehat{\S} \cap \S | < |\mathcal{I}|  \}.
\end{eqnarray}
Additionally, notice that since the procedure returns exactly $k$ indices,
\begin{eqnarray} \label{eqn:Eimp}
\mathcal{E}^c\; \Rightarrow \; d(\widehat{\S},\S) \leq 2(k-|\mathcal{I}|).
\end{eqnarray}

We can bound the probability of the event $\mathcal{E}$ in a fashion similar to the previous proof.  Consider a threshold at step $s$ given as $\tau_s = x_{\min} a_s/2$.   Assume that $y_{s,1}, y_{s,2},\dots, y_{s,t'_s}$ are measurements corresponding to isolated components, $y_{s,t'_s+ 1},\dots, y_{s,t_s}$ correspond to measurements that contain more than one non-zero index, and  $y_{s,t_s+1},\dots$ correspond to noise only measurements.  If all noise only measurements ($y_{s,t_s+1},\dots$) are below the threshold in absolute value, and all isolated non-zero components are above the threshold ($y_{s,1}, \dots, y_{s,t'_s}$), for all $s$, this implies $\mathcal{E}$ does not occur, regardless of $y_{s,t'_s+1}, \dots, y_{s,t_s}$.  As before, after applying a union bound,
\begin{eqnarray}  \nonumber
&\P \left( \mathcal{E} \right) &\leq \sum_{j=1}^{t'_1} \P\left(|y_{1,j}| \leq \tau_1 \right) + \sum_{j=t_1+1}^{\ell_0} \P\left(|y_{1,j} | \geq \tau_1 \right)  \\ \nonumber
&  + &\sum_{s=2}^{s_0} \left(\sum_{j=1}^{t'_s} \P\left(|y_{s,j}| \leq \tau_s \right) + \sum_{j=t_s+1}^{2 k} \P\left(|y_{s,j} | \geq \tau_s \right) \right) \\ \nonumber
&\leq& \ell_0 \exp \left(- \frac{x_{\min}^2 a_1^2}{8}  \right) + \sum_{s=2}^{s_0} 2 k \exp \left(- \frac{x_{\min}^2 a_s^2}{8}  \right) \\ \nonumber
&\leq &  \sum_{s=1}^{s_0}  \exp \left(- \frac{x_{\min}^2 a_s^2}{8} + \log \left(  \frac{8k}{\epsilon}\right)  \right) 
\end{eqnarray}
where the inequality follows from a standard Gaussian tail bound, and the last line follows as $\ell_0 \leq 8k/\epsilon$.  Setting $a_s = \sqrt{Ms/\gamma n}$ gives
\begin{eqnarray} \nonumber
\P \left( \mathcal{E} \right)  &\leq& \sum_{s=1}^{s_0}  \exp \left(- \frac{x_{\min}^2 M s}{8 \gamma n} + \log \left(  \frac{8k}{\epsilon}\right)  \right)  \\ \nonumber
&\leq& \sum_{s=1}^{s_0}  \exp \left(- \frac{x_{\min}^2 M s}{20 n} + \log \left(  \frac{8k}{\epsilon}\right)  \right)
\end{eqnarray}
since, from (\ref{eqn:gammabnd}),  $\gamma \leq 5/2$.
Next, imposing the condition of the Theorem, 
\begin{eqnarray} \nonumber
x_{\min} &\geq& \sqrt{ 20 \frac{n}{M} \left( \log k   +   2\log \left(\frac{8}{\epsilon}\right)  \right)  }   \\ \nonumber
& =& \sqrt{20 \frac{n}{M} \left(  \log \left( \frac{8}{\epsilon} \right) +  \log \left(  \frac{8k}{\epsilon}\right) \right) } , 
\end{eqnarray}
 then
\begin{eqnarray} \label{eqn:PeError}
\P \left( \mathcal{E} \right) &\leq& \sum_{s=1}^{s_0} \left(\frac{8}{\epsilon} \right)^{-s} \leq \epsilon/4.
\end{eqnarray}
By combining (\ref{eqn:expbnd}),  (\ref{eqn:Eimp}), and (\ref{eqn:PeError}), since $ \E[d(\widehat{\S},\S)| \mathcal{E}] \leq 2k$, 
\begin{eqnarray} \nonumber
\E[d(\widehat{\S},\S)]  &=& \P( \mathcal{E} ) \E[d(\widehat{\S},\S)| \mathcal{E}] + \P( \mathcal{E}^c )\E[d(\widehat{\S},\S) | \mathcal{E}^c] \\
&\leq& k \epsilon, \nonumber
\end{eqnarray}
completing the proof. 

\vspace{1.8cm}

\bibliographystyle{IEEEtran}
\bibliography{kSparseCBS.bib}

\newpage

\begin{IEEEbiographynophoto}{Matthew L. Malloy} received the B.S. degree in Electrical and Computer Engineering from the University of Wisconsin in 2004, the M.S. degree from Stanford University in 2005 in electrical engineering, and the Ph.D. degree from the University of Wisconsin in December 2012 in electrical engineering.   

Dr. Malloy currently holds a postdoctoral research position at the University of Wisconsin in the Wisconsin Institutes for Discovery.  From 2005-2008, he was a radio frequency design engineering for Motorola in Chicago, IL, USA.  In 2008 he received the Wisconsin Distinguished Graduate fellowship. In 2009, 2010 and 2011 he received the Innovative Signal Analysis fellowship.  

Dr. Malloy has served as a reviewer for the IEEE Transactions on Signal Processing, the IEEE Transactions on Information Theory, IEEE Transactions on Automatic Control and the Annals of Statistics.  He was awarded the best student paper award at the Asilomar Conference on Signals and Systems (2011).   His current research interests include signal processing, estimation and detection, information theory, statistics and optimization, with applications in communications, biology, and imaging. \vspace{-.8cm}
\end{IEEEbiographynophoto}

\begin{IEEEbiographynophoto}{Robert D. Nowak} received the B.S., M.S., and Ph.D. degrees in electrical engineering from the University of Wisconsin-Madison in 1990, 1992, and 1995, respectively.  He was a Postdoctoral Fellow at Rice University in 1995-1996, an Assistant Professor at Michigan State University from 1996-1999,  held Assistant and Associate Professor positions at Rice University from 1999-2003, and is now the McFarland-Bascom Professor of Engineering at the University of Wisconsin-Madison. 
 \newpage
 
Professor Nowak has held visiting positions at INRIA, Sophia-Antipolis (2001), and Trinity College, Cambridge (2010). He has served as an Associate Editor for the IEEE Transactions on Image Processing and the ACM Transactions on Sensor Networks, and as the Secretary of the SIAM Activity Group on Imaging Science. He was General Chair for the 2007 IEEE Statistical Signal Processing workshop and Technical Program Chair for the 2003 IEEE Statistical Signal Processing Workshop and the 2004 IEEE/ACM International Symposium on Information Processing in Sensor Networks. 

Professor Nowak received the General Electric Genius of Invention Award (1993), the National Science Foundation CAREER Award (1997), the Army Research Office Young Investigator Program  Award (1999), the Office of Naval Research Young Investigator Program Award (2000), the IEEE Signal Processing Society Young Author Best Paper Award (2000), the IEEE Signal Processing Society Best Paper Award (2011), ASPRS Talbert Abrams Paper Award (2012), and the IEEE W.R.G. Baker Award (2014). He is a Fellow of the Institute of Electrical and Electronics Engineers (IEEE). His research interests include signal processing, machine learning, imaging and network science, and applications in communications, bioimaging, and systems biology.
\end{IEEEbiographynophoto}

\end{document}